\documentclass{llncs}

\setlength{\textwidth}{15cm}
\setlength{\textheight}{20.5cm}
\setlength{\oddsidemargin}{0.6cm}
\setlength{\evensidemargin}{0.6cm}
\setlength{\topmargin}{0.5cm}

\usepackage{epsfig}
\usepackage{gensymb}
\usepackage[english]{babel}
\usepackage{graphicx}
\usepackage{amssymb}
\usepackage{multirow}

\title{Optimal Morphs of Convex Drawings\thanks{Work partially supported by MIUR project AMANDA ``Algorithmics for MAssive and Networked DAta'', prot. 2012C4E3KT\_001, and by NSERC of Canada.}}

\date{}

\author{Patrizio Angelini$^1$, Giordano Da Lozzo$^1$, Fabrizio Frati$^1$,\\ Anna Lubiw$^2$, Maurizio Patrignani$^1$, Vincenzo Roselli$^1$
\institute{
$1$ Dipartimento di Ingegneria, Roma Tre University, Italy\\
\email{\{angelini,dalozzo,frati,patrigna,roselli\}@dia.uniroma3.it}\\
$2$  Cheriton School of Computer Science, University of Waterloo, Canada\\
\email{alubiw@uwaterloo.ca}
}}

\newcommand{\morph}[1]{\ensuremath{\langle #1 \rangle}}

\newcommand{\remove}[1]{}

\newtheorem{claimx}{\bf Claim}

\renewenvironment{proof}
{{\bf Proof:}}{\hspace*{\fill}$\Box$\par\vspace{2mm}}

\begin{document}
\pagestyle{plain}

\maketitle

\begin{abstract}
We give an algorithm to compute a morph between any two convex drawings of the same plane graph. The morph preserves the convexity of the drawing at any time instant and moves each vertex along a piecewise linear curve with linear complexity. The linear bound is asymptotically optimal in the worst case. \end{abstract}

\section{Introduction} \label{se:introduction}


Convex drawings of plane graphs are a classical topic of investigation in geometric graph theory. A characterization~\cite{t-prg-84} of the plane graphs that admit convex drawings and a linear-time algorithm~\cite{cyn-lacdp-84} to test whether a graph admits a convex drawing are known. Convex drawings in small area~\cite{br-csc3c-06,bfm-cdcpg-07,ck-cgd3pg-97}, orthogonal convex drawings~\cite{rnn-rgdpg-98,rng-rdpg-04,t-prg-84}, and convex drawings satisfying a variety of further geometric constraints~\cite{hn-cdhpgcpg-10,hn-ltascditpg-10} have also been studied. It is intuitive, but far from trivial to prove, that the space of the
 convex drawings of any $n$-vertex plane graph $G$ is connected; i.e., the points in $\mathbb{R}^{2n}$, each corresponding to the two-dimensional coordinates of a convex drawing of $G$, form a connected set. Expressed in yet another way, 
 there exists a {\em convex morph} between any two convex drawings $\Gamma_s$ and $\Gamma_t$ of the same plane graph $G$, that is, a continuous deformation from $\Gamma_s$ to $\Gamma_t$ so that the intermediate drawing of $G$ is convex at any instant of the deformation. The main result of this paper is the existence of a convex morph between any two convex drawings of the same plane graph such that each vertex moves along a piecewise linear curve with linear complexity during the deformation.

The existence of a convex morph between any two convex drawings of the same plane graph was first proved by Thomassen~\cite{t-dpg-83} more than 30 years ago. Thomassen's result confirmed a conjecture of Gr\"unbaum and Shepard~\cite{gs-tgopg-81} and improved upon a previous result of Cairns~\cite{c-dprc-44}, stating that there exists a continuous deformation, called a {\em morph}, between any two straight-line planar drawings of the same plane graph $G$ such that any intermediate straight-line drawing of $G$ is planar. More recently, motivated by applications in computer graphics, animation, and modeling, a number of algorithms for morphing graph drawings have been designed~\cite{ekp-ifmpg-03,fe-gdm-02,gs-gifpm-01,sg-cmcpt-01,sg-imct-03}. These algorithms aim to construct morphs that preserve the topology of the given drawings at any time, while guaranteeing that the trajectories of the vertices are ``nice'' curves.

%

Straight-line segments are undoubtedly the most readable and appealing curves for the vertex trajectories. However, {\em linear morphs}~--~morphs in which the vertices move along straight lines~--~do not always exist~\cite{ekp-ifmpg-03}. A natural way to overcome this problem is to allow vertices to move along piecewise linear curves.
Since trajectories of large complexity would have a dramatically detrimental impact on the readability of the morph, an important goal is to minimize the complexity of these curves. This problem is formalized as follows. Let $\Gamma_s$ and $\Gamma_t$ be two planar straight-line drawings of a plane graph $G$. Find a sequence $\Gamma_s=\Gamma_1,\dots,\Gamma_k=\Gamma_t$ of planar straight-line drawings of $G$ such that, for $1\leq i\leq k-1$, the linear morph transforming $\Gamma_i$ into $\Gamma_{i+1}$, called a {\em morphing step}, is planar and $k$ is small.

The first polynomial upper bound for this problem was recently obtained by Alamdari {\em et al.}~\cite{aac-mpgdpns-13}. The authors proved that a morph between any two planar straight-line drawings of the same $n$-vertex connected plane graph exists with $O(n^4)$ morphing steps. The $O(n^4)$ bound was later improved to $O(n^2)$~\cite{afpr-mpgde-13} and then to a worst-case optimal $O(n)$ bound by Angelini {\em et al.}~\cite{addfpr-mpgdo-14}. The algorithm of Angelini {\em et al.}~\cite{addfpr-mpgdo-14} can be extended to work for disconnected graphs at the expense of an increase in the number of steps to $O(n^{1.5})$~\cite{abcdfm-ccapdg-15}.

In this paper we give an algorithm to construct a convex morph between any two convex drawings of the same $n$-vertex plane graph with $O(n)$ morphing steps. Our algorithm preserves the convexity of the drawing at any time instant and in fact preserves strict convexity, if the given drawings are strictly-convex. The linear bound is tight in the worst case, as can be shown by adapting the lower bound construction of Angelini {\em et al.}~\cite{addfpr-mpgdo-14}. We remark that Thomassen's algorithm~\cite{t-dpg-83} constructs convex morphs with an exponential number of steps. To the best of our knowledge, no other algorithm is known to construct a convex morph between any two convex drawings of the same plane graph.

The outline of our algorithm is simple. 
Let $\Gamma_s$ and $\Gamma_t$ be two convex drawings of the same {\em convex graph} $G$, that is, a plane graph that admits a convex drawing. 
Determine a connected subgraph $G'$ of $G$ such that removing $G'$ from $G$ results in a smaller convex graph $G''$. 
Then $G'$ lies inside one face $f$ of $G''$.
Morph $\Gamma_s$ into a drawing $\Gamma'_s$ of $G$ and morph $\Gamma_t$ into a drawing $\Gamma'_t$ of $G$ such that the cycle of $G$ corresponding to $f$ is delimited by a convex polygon in $\Gamma'_s$ and in $\Gamma'_t$. These morphs consist of one morphing step each. Remove $G'$ from $\Gamma'_s$ and $\Gamma'_t$ to obtain two convex drawings $\Gamma''_s$ and $\Gamma''_t$ of $G''$. Finally, recursively compute a morph between $\Gamma''_s$ and $\Gamma''_t$.  
Since $f$ remains convex throughout the whole morph from $\Gamma''_s$ to $\Gamma''_t$, a morph of $G$ from $\Gamma'_s$ to $\Gamma'_t$ can be obtained from the morph of $G''$ from $\Gamma''_s$ to $\Gamma''_t$ by suitably drawing $G'$ inside $f$ at each intermediate step of such a morph.
The final morph from $\Gamma_s$ to $\Gamma_t$ consists of the 
morph from $\Gamma_s$ to $\Gamma'_s$ followed by the morph from $\Gamma'_s$ to $\Gamma'_t$, and then the reverse of the morph from  $\Gamma_t$ to $\Gamma'_t$. Our algorithm has two main ingredients.

The first ingredient is a structural decomposition of convex graphs that generalizes a well-known structural decomposition of triconnected planar graphs due to Barnette and Gr\"unbaum~\cite{bg-stcc3p-69}. The latter states that any subdivision of a triconnected planar graph contains a path whose removal results in a subdivision of a smaller triconnected planar graph. For convex graphs we can prove a similar theorem which states, roughly speaking, 
that any convex graph contains a path, or three paths incident to the same vertex, whose removal results in a smaller convex graph. 
Our approach is thus based on \emph{removing} a subgraph from the input graph.  This differs from the recent papers on morphing graph drawings~\cite{aac-mpgdpns-13,addfpr-mpgdo-14,afpr-mpgde-13}, where the basic operation is to {\em contract} (i.e.~move arbitrarily close) a vertex to a neighbor.
One of the difficulties of the previous approach was to determine a trajectory for a contracted vertex inside the moving polygon of its neighbors. By removing a subgraph and forcing the newly formed face to be convex, we avoid this difficulty.

The second ingredient is a relationship between {\em unidirectional morphs} and level planar drawings of hierarchical graphs, which allows us to compute the above mentioned morphs between $\Gamma_s$ and $\Gamma'_s$ and between $\Gamma_t$ and $\Gamma'_t$ with one morphing step. This relationship was first observed by Angelini {\em et al.}~\cite{addfpr-mpgdo-14}. However, in order to use it in our setting, we need to prove that every strictly-convex graph admits a {\em strictly-convex} level planar drawing; this strengthens a result of Hong and Nagamochi~\cite{hn-cdhpgcpg-10} and might be of independent interest.

We leave open the question whether any two straight-line drawings of the same plane graph $G$ can be morphed so that every intermediate drawing has polynomial {\em size} (e.g., the ratio between the length of any two edges is polynomial in the size of $G$ during the entire morph). In order to solve this problem positively, our approach seems to be better than previous ones; intuitively, subgraph removals are more suitable than vertex contractions for a morphing algorithm that doesn't blow up the size of the intermediate drawings. Nevertheless, we haven't yet been able to prove that polynomial-size morphs always exist.

%




\section{Definitions and Preliminaries} \label{se:preliminaries}\label{subse:monotonicity}\label{subse:hierarchical-level}

In this section we give some definitions and preliminaries.

{\bf Drawings and Embeddings.} A \emph{straight-line planar drawing} $\Gamma$ of a graph maps vertices to points in the plane and edges to internally disjoint straight-line segments. Drawing $\Gamma$ partitions the plane into topologically connected regions, called  {\em faces}. The bounded faces are \emph{internal} and the unbounded face is the \emph{outer face}. A vertex (an edge) is {\em external} if it is incident to the outer face and {\em internal} otherwise. A vertex $x$ is \emph{convex}, \emph{flat}, or \emph{concave} in an incident face $f$ in $\Gamma$, if the angle at $x$ in $f$ is smaller than, equal to, or larger than $\pi$ radians, respectively.
Drawing $\Gamma$ is {\em convex} ({\em strictly-convex}) if for each vertex $v$ and each face $f$ vertex $v$ is incident to, $v$ is either convex or flat (is convex) in $f$, if $f$ is internal, and $v$ is either concave or flat (is concave) in $f$, if $f$ is the outer face.
A planar drawing determines a clockwise ordering of the edges incident to each vertex. Two planar drawings of a connected planar graph are \emph{equivalent} if they determine the same clockwise orderings and have the same outer face. A \emph{plane embedding} is an equivalence class of planar drawings. A graph with a plane embedding is a \emph{plane graph}. A {\em convex} ({\em strictly-convex}) graph is a plane graph that admits a convex (resp. strictly-convex) drawing with the given plane embedding.

{\bf Subgraphs and Connectivity.}
A subgraph $G'$ of a plane graph $G$ is regarded as a plane graph whose plane embedding is 
obtained from $G$ by removing all the vertices and edges not in $G'$. We denote by $G-e$ (by $G-S$) the plane graph obtained from $G$ by removing an edge $e$ of $G$ (resp.\ a set $S$ of vertices and their incident edges).

We denote by $\deg(G,v)$ the degree of a vertex $v$ in a graph $G$. A graph $G$ is {\em \mbox{biconnected}} ({\em triconnected}) if removing any vertex (resp.\ any two vertices) leaves $G$ connected. A {\em separation pair} in a graph $G$ is a pair of vertices whose removal disconnects $G$. A biconnected plane graph $G$ is {\em internally triconnected} if introducing a new vertex in the outer face of $G$ and connecting it to all the vertices incident to the outer face of $G$ results in a triconnected graph. Thus, internally triconnected plane graphs form a super-class of triconnected plane graphs. A \emph{split component} of a graph $G$ with respect to a separation pair $\{u, v\}$ is either an edge $(u, v)$ or a maximal subgraph $G'$ of $G$ that does not contain edge $(u,v)$, that contains vertices $u$ and $v$, and such that $\{u,v\}$ is not a separation pair of $G'$; we say that $\{u, v\}$ {\em determines} the split components with respect to $\{u, v\}$. For an internally triconnected plane graph $G$, every separation pair $\{u,v\}$ determines two or three split components; further, in the latter case, one of them is an edge $(u,v)$ not incident to the outer face of $G$. 

A \emph{subdivision} $G'$ of a graph $G$ is a graph obtained from $G$ by replacing each edge $(u,v)$ with a path between $u$ and $v$; the internal vertices of this path are called {\em subdivision vertices}. Given a subgraph $H$ of $G$, the subgraph $H'$ of $G'$ {\em corresponding to} $H$ is obtained from $H$ by replacing each edge $(u,v)$ with a path with the same number of vertices as in $G'$.

{\bf Convex Graphs.}
Convex graphs have been 
thoroughly studied, both combinatorially and algorithmically. Most of the known results about convex graphs are stated in the following setting. The input consists of a plane graph $G$ and a convex polygon $P$ representing the cycle $C$ delimiting the outer face of $G$. The problem asks whether $G$ admits a convex drawing in which $C$ is represented by $P$. The known characterizations for this setting imply characterizations and recognition algorithms for the class of convex graphs (with no constraint on the representation of the cycle delimiting the outer face). Quite surprisingly, the literature seems to lack explicit statements of the characterizations in this unconstrained setting. Here we present two theorems, whose proofs can be easily derived from known results~\cite{cyn-lacdp-84,hn-cdhpgcpg-10,t-prg-84}.

\begin{theorem}\label{th:convex-characterization}
A plane graph is convex if and only if it is a subdivision of an internally triconnected plane graph.
\end{theorem}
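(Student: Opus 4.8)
The plan is to prove the two directions separately, reducing in each case to known results about convex drawings with a prescribed convex outer polygon. Throughout, let $G$ be a plane graph and let $C$ be the cycle delimiting its outer face.

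For the ``only if'' direction, suppose $G$ is convex, and fix a convex drawing $\Gamma$ of $G$. I would first argue that $G$ is biconnected: in a convex drawing every face is delimited by a simple cycle (a convex polygon has no repeated vertices), and a plane graph in which every face boundary is a simple cycle is biconnected. Next, suppress all degree-$2$ vertices to obtain a plane graph $H$ with $G$ a subdivision of $H$; it remains to show $H$ is internally triconnected. Adding an apex vertex $v_\infty$ in the outer face adjacent to all vertices of $C$ yields a plane graph $H^+$, and I must show $H^+$ is triconnected, i.e.\ has no separation pair. A separation pair $\{x,y\}$ of $H^+$ that avoids $v_\infty$ would be a separation pair of $H$ with all of $C$ on one side; passing back to $G$, one gets two vertices whose removal leaves a vertex $w$ of $G$ separated from $C$, and then in $\Gamma$ the vertex $w$ lies in the interior of a region bounded by (at most) two straight segments through $x$ and $y$ together with part of the drawing ``behind'' them — a standard convexity/shortest-path argument (as in~\cite{t-prg-84,cyn-lacdp-84}) shows this forces a reflex angle at some internal vertex, a contradiction. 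The case where the separation pair includes $v_\infty$ reduces to the previous one since $v_\infty$ is adjacent to every vertex of $C$. Hence $H^+$ is triconnected and $H$ is internally triconnected, so $G$ is a subdivision of an internally triconnected plane graph.

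For the ``if'' direction, let $G$ be a subdivision of an internally triconnected plane graph $H$. I would invoke the known characterization (Thomassen~\cite{t-prg-84}, see also~\cite{hn-cdhpgcpg-10}) of which plane graphs admit a convex drawing with a given convex polygon $P$ representing the outer cycle $C$: the condition is a purely combinatorial one on how the chords and ``bends'' of $C$ distribute among the split components, and it is automatically satisfied when one is allowed to choose $P$ freely. Concretely, for an internally triconnected plane graph one can always take $P$ to be a strictly-convex polygon whose corners are placed at the vertices of $C$ that are ``essential'' (those incident to internal edges or of high degree), which makes the required conditions hold; then the cited algorithm produces a convex drawing of $H$ with outer face $P$. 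Finally, placing each subdivision vertex of $G$ at the appropriate point along the corresponding straight segment of the drawing of $H$ yields a convex drawing of $G$ (the only new angles created are flat angles of $\pi$, which are allowed in a convex drawing). Thus $G$ is convex.

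The main obstacle is the ``only if'' direction: turning the geometric intuition ``a vertex trapped behind a separation pair in a convex drawing forces a reflex angle'' into a clean argument, and correctly handling the subdivision vertices (so that internal triconnectivity is asserted for the smoothed graph $H$ rather than for $G$ itself, since $G$ may legitimately have degree-$2$ vertices). The ``if'' direction is essentially bookkeeping on top of~\cite{t-prg-84,cyn-lacdp-84,hn-cdhpgcpg-10}, the only care being to choose the outer polygon $P$ so that the classical drawability conditions are met and to check that subdivision vertices can be inserted without destroying convexity.
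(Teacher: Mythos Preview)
The paper does not actually give a proof of Theorem~\ref{th:convex-characterization}: it presents the statement and says only that ``[its proof] can be easily derived from known results~\cite{cyn-lacdp-84,hn-cdhpgcpg-10,t-prg-84}''. Your proposal is precisely an attempt to carry out that derivation, and in spirit it matches what the paper is pointing at: reduce both directions to the classical characterizations of convex drawability with a prescribed outer polygon.

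A couple of places in your sketch deserve tightening. First, ``suppress all degree-$2$ vertices to obtain $H$'' is not literally well-defined when $G$ is itself a cycle; in that case $G$ is already internally triconnected (the wheel $G+v_\infty$ is triconnected), so you should treat this as a trivial base case rather than force the suppression. Second, your handling of a separation pair $\{v_\infty,x\}$ of $H^+$ is cleaner than you make it: removing $v_\infty$ from $H^+$ yields $H$, so such a pair would make $x$ a cut vertex of $H$, contradicting the biconnectivity of $H$ that you already established. Finally, in the ``if'' direction you should be explicit that choosing the outer polygon $P$ strictly convex on all vertices of $C$ that have degree at least three in $H$ is what makes the Thomassen/Chiba--Yamanouchi--Nishizeki conditions go through for an internally triconnected $H$; the step of reinserting subdivision vertices as flat points is fine.

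None of these are genuine gaps; they are the bookkeeping one expects when unpacking a result the paper chose to leave as a citation.
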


\begin{theorem}\label{th:strictly-convex-characterization}
A plane graph is strictly-convex if and only if it is a subdivision of an internally triconnected plane graph and every degree-$2$ vertex is external.
\end{theorem}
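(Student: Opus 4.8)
The plan is to derive both characterizations from the known results in the constrained setting, where the outer cycle is pre-drawn as a convex polygon, and handle the degree-$2$ vertices separately for the strict case.

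\medskip

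\noindent\textbf{Proof of Theorem~\ref{th:convex-characterization}.}
First I would prove the ``only if'' direction. Suppose $G$ is convex, and let $\Gamma$ be a convex drawing of $G$. By definition, every vertex incident to the outer face is either flat or concave in the outer face. Suppressing all flat vertices on the outer boundary, the outer face becomes a strictly convex polygon $P$; equivalently, $G$ is a subdivision of a plane graph $H$ whose outer face is drawn as $P$ in the induced drawing $\Gamma_H$. I would then invoke the classical characterization of Tutte/Thomassen~\cite{t-prg-84} (together with the refinement in~\cite{cyn-lacdp-84}): a plane graph with a prescribed convex outer polygon admits a convex drawing with that outer polygon only if it is internally triconnected (the relevant necessary condition is that there is no ``separating structure'' forcing a reflex angle, which for a $2$- or $3$-cut is exactly internal triconnectivity). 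Hence $H$ is internally triconnected, and $G$ is a subdivision of it. For the ``if'' direction, suppose $G$ is a subdivision of an internally triconnected plane graph $H$. Fix any strictly convex polygon $P$ realizing the outer cycle of $H$ with the correct number of vertices. By the sufficiency part of the same characterization~\cite{t-prg-84,cyn-lacdp-84}, $H$ has a convex drawing with outer boundary $P$; drawing each subdivision vertex of $G$ at an appropriate point along the corresponding straight segment yields a convex (not necessarily strict) drawing of $G$ with the given embedding. Thus $G$ is convex.

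\medskip

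\noindent\textbf{Proof of Theorem~\ref{th:strictly-convex-characterization}.}
For the ``only if'' direction, let $\Gamma$ be a strictly-convex drawing of $G$. Every internal vertex is strictly convex in each incident face, so a degree-$2$ internal vertex $v$ would have the two faces incident to it each with a convex (i.e.\ $<\pi$) angle at $v$, summing to less than $2\pi$, which is impossible since the two angles at a degree-$2$ vertex sum to exactly $2\pi$. Hence every degree-$2$ vertex is external. Moreover, a strictly-convex drawing is in particular convex, so by Theorem~\ref{th:convex-characterization} $G$ is a subdivision of an internally triconnected plane graph. For the ``if'' direction, suppose $G$ is a subdivision of an internally triconnected plane graph $H$ in which every degree-$2$ vertex is external; the degree-$2$ vertices of $G$ are precisely the subdivision vertices on the outer cycle. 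By Hong and Nagamochi~\cite{hn-cdhpgcpg-10} (or again~\cite{t-prg-84,cyn-lacdp-84} applied with a strictly convex outer polygon), $H$ admits a drawing in which every internal vertex and every internal face is strictly convex and the outer face is a strictly convex polygon $P$. Now realize $G$: along each edge of $H$ that carries subdivision vertices, if the edge is internal we must place those vertices so that the path stays straight-line strictly-convex at each subdivision vertex in both incident faces — but there are none since all such subdivision vertices are external by hypothesis; for the external edges of $H$, the subdivision vertices lie on the outer polygon $P$, and we place them on the open segment between the endpoints so that each is concave ($>\pi$) in the outer face and convex in the unique incident internal face, which is possible because $P$ is strictly convex and the internal face at that edge already makes a strict angle on the $H$-side. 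Hence the resulting drawing of $G$ is strictly convex.

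\medskip

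\noindent\emph{Main obstacle.} The delicate point is extracting clean \emph{unconstrained} statements from the literature, which states everything with a prescribed outer polygon: one must argue that the choice of the convex outer polygon is immaterial (any internally triconnected plane graph admits a convex drawing with \emph{some} convex outer polygon, and conversely any convex drawing can be ``straightened'' at flat outer vertices to expose the underlying internally triconnected structure), and that subdivision vertices can always be slid into place without destroying convexity. Both are routine but need the careful case analysis on internal versus external edges sketched above; the rest is a direct appeal to~\cite{t-prg-84,cyn-lacdp-84,hn-cdhpgcpg-10}.
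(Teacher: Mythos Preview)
Your approach is exactly what the paper intends: it gives no explicit proof of Theorems~\ref{th:convex-characterization} and~\ref{th:strictly-convex-characterization}, merely stating that they ``can be easily derived from known results~\cite{cyn-lacdp-84,hn-cdhpgcpg-10,t-prg-84}'' in the constrained setting. So there is nothing to compare at the level of strategy.

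There is, however, a concrete slip in your ``if'' direction for Theorem~\ref{th:strictly-convex-characterization}. You draw $H$ with a strictly convex outer polygon $P$ and then place each external subdivision vertex of $G$ ``on the open segment between the endpoints so that each is concave ($>\pi$) in the outer face and convex in the unique incident internal face''. But a point placed on the open segment $\overline{ab}$ has angle exactly $\pi$ on both sides --- it is \emph{flat}, not concave in the outer face and not convex in the internal face. Under the paper's definition of strictly-convex, every external vertex must be strictly concave in the outer face, so this construction fails precisely at the subdivision vertices you are trying to insert.

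The fix is the obvious one you almost wrote: choose the strictly convex outer polygon for the outer cycle of $G$ itself (one corner per external vertex of $G$, subdivision vertices included), rather than for that of $H$. This polygon is still strictly convex at the vertices of $H$, so the constrained results of~\cite{t-prg-84,cyn-lacdp-84,hn-cdhpgcpg-10} still yield a strictly-convex drawing of $H$ with that boundary; and now each external subdivision vertex sits at a genuine corner of a strictly convex polygon, hence is concave in the outer face and convex in its unique incident internal face. With this correction your argument goes through.
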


{\bf Monotonicity.}
A {\em straight arc} $\vec{xy}$ is a straight line segment directed from a point $x$ to a point $y$; $\vec{xy}$ is \emph{monotone} with respect to an oriented straight line $\vec{d}$ if the projection of $x$ on $\vec{d}$ precedes the projection of $y$ on $\vec{d}$ according to the orientation of $\vec{d}$. A path $(u_1, \dots, u_n)$ is \emph{$\vec{d}$-monotone} if $\vec{u_i u_{i+1}}$ is monotone with respect to $\vec{d}$, for $i=1, \dots, n-1$; a polygon $Q$ is \emph{$\vec{d}$-monotone} if it contains two vertices $s$ and $t$ such that the two paths between $s$ and $t$ in $Q$ are both $\vec{d}$-monotone. A path $P$ (a polygon $Q$) is \emph{monotone} if there exists an oriented straight line $\vec{d}$ such that $P$ (resp.\ $Q$) is $\vec{d}$-monotone. We have the following.

\begin{lemma} \label{le:convex-is-monotone}{\bf (Angelini et al.~\cite{addfpr-mpgdo-14})}
Let $Q$ be a convex polygon and $\vec{d}$ be an oriented straight line not orthogonal to any straight line through two vertices of $Q$. Then $Q$ is $\vec{d}$-monotone.
\end{lemma}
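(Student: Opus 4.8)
The plan is to show that a convex polygon $Q$ is $\vec{d}$-monotone whenever $\vec{d}$ is not orthogonal to any line through two vertices of $Q$. First I would orient $\vec{d}$ and consider the orthogonal projection of the vertices of $Q$ onto $\vec{d}$. Because $\vec{d}$ is not orthogonal to any segment joining two vertices of $Q$, no two vertices project to the same point, so the projections of the vertices of $Q$ are pairwise distinct; hence there is a unique vertex $s$ whose projection is minimum and a unique vertex $t$ whose projection is maximum along $\vec{d}$.

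Next I would argue that the two paths $\pi_1$ and $\pi_2$ of $Q$ connecting $s$ to $t$ are each $\vec{d}$-monotone. The key observation is that for a convex polygon, as one traverses the boundary from $s$ to $t$ along either path, the projection onto $\vec{d}$ is monotonically nondecreasing; combined with the fact that consecutive vertices have distinct projections, this makes it strictly increasing, which is exactly $\vec{d}$-monotonicity of each path. To see the nondecreasing property, I would use convexity: if some edge $\vec{u_i u_{i+1}}$ of a path were not monotone with respect to $\vec{d}$ (i.e., the projection of $u_{i+1}$ preceded that of $u_i$), then the path would have to ``turn back'' and later ``turn forward'' again to reach $t$ from $s$, creating a reflex behavior incompatible with $Q$ being convex — more precisely, the supporting line of $Q$ at a vertex where the projection is locally maximal (other than $t$) or locally minimal (other than $s$) would have to cross the interior of $Q$, contradicting convexity.

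A clean way to make this rigorous is to pick a point $p$ far along $\vec{d}$ and note that the function ``signed distance of a boundary point to $p$ measured along $\vec{d}$'' attains its extrema on $Q$ only at $s$ and $t$; on a convex polygon the set of boundary points where a linear functional is at most a given value is a single sub-arc of the boundary, which forces each of $\pi_1$ and $\pi_2$ to have projections forming a single monotone run from the value at $s$ to the value at $t$. Together with distinctness of consecutive projections, this yields that $\pi_1$ and $\pi_2$ are both $\vec{d}$-monotone, so $Q$ is $\vec{d}$-monotone by definition.

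I expect the main obstacle to be the bookkeeping needed to phrase ``a linear functional restricted to the boundary of a convex polygon is unimodal'' carefully enough to conclude strict monotonicity of each path, and in particular to rule out the case where an edge of $Q$ is parallel to $\vec{d}$ (which is excluded precisely because $\vec{d}$ is not orthogonal to the segment joining the two endpoints of that edge — this is where the genericity hypothesis on $\vec{d}$ is used, and it must be invoked for every edge and not just for $s$ and $t$). Since this is a result already attributed to Angelini et al.~\cite{addfpr-mpgdo-14}, I would keep the argument short and lean on the standard fact that linear functionals are unimodal on convex polygons.
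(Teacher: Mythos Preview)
The paper does not give a proof of this lemma: it is stated with an attribution to Angelini et al.~\cite{addfpr-mpgdo-14} and used as a black box. So there is no ``paper's own proof'' to compare against.

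That said, your argument is correct and is the standard one. Projecting onto $\vec{d}$ and using the genericity hypothesis to get pairwise distinct projections (hence a unique minimum $s$ and maximum $t$) is exactly right, and the unimodality of a linear functional along the boundary of a convex polygon then forces each of the two $s$--$t$ boundary paths to be strictly $\vec{d}$-monotone. Your remark that the hypothesis is needed for \emph{every} pair of vertices, and in particular for every edge (to rule out edges parallel to $\vec{d}$, which would make a path merely weakly monotone), is the one subtlety worth spelling out, and you have identified it. If you want to tighten the write-up, you can drop the supporting-line and ``point $p$ far along $\vec{d}$'' detours and simply say: the linear functional $x\mapsto \langle x,\vec{d}\rangle$ restricted to the boundary of a convex polygon has exactly one local minimum and one local maximum, so on each boundary arc from $s$ to $t$ it is monotone; distinctness of vertex projections upgrades this to strict monotonicity of each edge.
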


\begin{lemma} \label{le:two-polygons-monotone} Let $Q_1$ and $Q_2$ be strictly-convex polygons sharing an edge $e$ and lying on opposite sides of the line through $e$. Let $P_i$ be the path obtained from $Q_i$ by removing edge $e$, for $i=1,2$. The polygon  $Q$ composed of $P_1$ and $P_2$ is monotone; further, an oriented straight-line with respect to which $Q$ is monotone can be obtained by slightly rotating and arbitrarily orienting a line orthogonal to the line through $e$.
\end{lemma}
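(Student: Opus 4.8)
The plan is to choose the oriented line $\vec{d}$ explicitly, as a slight generic rotation of the line orthogonal to the line $\ell$ through $e$, and then to show that the $\vec{d}$-projections of the vertices of $Q$, read along the boundary of $Q$, are \emph{bitonic}: they strictly increase up to a single extreme vertex $s$ and strictly decrease down to a single extreme vertex $t$, and then $s$ and $t$ are the poles witnessing that $Q$ is $\vec{d}$-monotone.

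First I would fix coordinates so that $\ell$ is horizontal, $e=uv$ lies on $\ell$, the polygon $Q_1$ lies above $\ell$, and $Q_2$ lies below $\ell$. Since $Q_1$ and $Q_2$ are strictly convex and $e$ is an edge of each, every vertex of $Q_1$ other than $u$ and $v$ lies strictly above $\ell$ and every vertex of $Q_2$ other than $u$ and $v$ lies strictly below $\ell$; hence $P_1$ and $P_2$ share only $u$ and $v$, and $Q=P_1\cup P_2$ is indeed a simple polygon. I would then take $\vec{d}$ to be the upward vertical direction rotated by an angle $\varepsilon$ chosen so that: (i) $\vec{d}$ is not orthogonal to the line through any two vertices of $Q_1$ or of $Q_2$; and (ii) $u$ and $v$ still have the two smallest $\vec{d}$-projections among the vertices of $Q_1$ and the two largest $\vec{d}$-projections among the vertices of $Q_2$. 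Only finitely many angles are excluded by~(i), while~(ii) holds for every sufficiently small $|\varepsilon|$ because $u$ and $v$ lie on $\ell$ whereas all other vertices of $Q_1$ (resp.\ $Q_2$) are at positive (resp.\ negative) distance from $\ell$; so such an $\varepsilon$ exists. After possibly renaming, assume the $\vec{d}$-projection of $u$ is smaller than that of $v$.

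Next I would invoke convexity. Because $Q_1$ is convex and $\vec{d}$ is generic by~(i), the $\vec{d}$-projections of the vertices of $Q_1$ along its boundary are strictly unimodal; by~(ii) the minimum is attained at $u$, so traversing $P_1$ from $u$ to $v$ the projection strictly increases up to its maximum, attained at a vertex $s$, and then strictly decreases down to $v$. Thus $P_1$ is the concatenation of a $\vec{d}$-monotone path from $u$ to $s$ and a $\vec{d}$-monotone path from $s$ to $v$. Symmetrically, along $Q_2$ the projections are strictly unimodal with maximum at $v$, so traversing $P_2$ from $u$ to $v$ the projection strictly decreases to its minimum, attained at a vertex $t$, and then strictly increases to $v$; thus $P_2$ is the concatenation of a $\vec{d}$-monotone path from $u$ to $t$ and a $\vec{d}$-monotone path from $t$ to $v$. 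Comparing the two polygons, $s$ has the largest and $t$ the smallest $\vec{d}$-projection among all vertices of $Q$, and the projections of both $u$ and $v$ lie strictly between those of $t$ and $s$. The two paths of $Q$ between $s$ and $t$ are the one through $u$, namely the subpath of $P_1$ from $s$ to $u$ followed by the subpath of $P_2$ from $u$ to $t$, and the one through $v$, namely the subpath of $P_1$ from $s$ to $v$ followed by the subpath of $P_2$ from $v$ to $t$. Along the first path the projection strictly decreases from $s$ to $u$ and then strictly decreases from $u$ to $t$, so, since the projection of $u$ lies strictly between those of $s$ and $t$, the whole path is $\vec{d}$-monotone; the second path is $\vec{d}$-monotone by the same argument with $v$ in place of $u$. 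Hence $Q$ is $\vec{d}$-monotone, and since reversing the orientation of $\vec{d}$ merely swaps the roles of $s$ and $t$, $Q$ is monotone with respect to the rotated line under either orientation, exactly as claimed.

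The only delicate point is the two-sided choice of $\vec{d}$ in the second step: it must be generic so that all the unimodality used afterwards is strict, and at the same time close enough to orthogonal to $\ell$ that $u$ and $v$ remain the extreme vertices of $Q_1$ and of $Q_2$. Once $\vec{d}$ is fixed, the bitonic shape of the boundary of $Q$ and the identification of the poles $s$ and $t$ follow from the convexity of $Q_1$ and $Q_2$ essentially by inspection.
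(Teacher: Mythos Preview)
Your proof is correct and follows essentially the same approach as the paper's. Both arguments pick $\vec d$ as a small generic rotation of the perpendicular to $\ell$, use the strict convexity of $Q_1$ and $Q_2$ to obtain unimodality of the $\vec d$-projections around each polygon, identify $u$ as the minimum of $Q_1$ and $v$ as the maximum of $Q_2$, and then glue the resulting monotone subpaths at $u$ and $v$ to obtain two $\vec d$-monotone paths between the global extremes $s$ and $t$. The only cosmetic difference is that the paper phrases the ``small enough rotation'' via the angle bound $\pi-\epsilon$ and invokes Lemma~\ref{le:convex-is-monotone}, whereas you phrase it as the continuity condition~(ii) and invoke unimodality directly; your version is in fact a bit more explicit about why $u$ and $v$ end up as the appropriate extremes.
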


\begin{proof}
	Let $u$ and $v$ be the end-vertices of $e$. Let $\vec{l}$ be the straight line through $e$, oriented so that arc $\vec{uv}$ has positive projection on $\vec{l}$. Assume w.l.o.g. that $Q_1$ is to the left of $\vec{l}$ and $Q_2$ is to the right of $\vec{l}$ when walking along $\vec{l}$ according to its orientation. Let $\vec{d}$ be an oriented straight line obtained by counter-clockwise rotating $\vec{l}$ by $\pi/2$ radians. Let $\epsilon>0$ be sufficiently small so that: (i) every angle internal to $Q_1$ and $Q_2$ is smaller than $\pi-\epsilon$ radians and (ii) the line obtained by clockwise rotating $\vec{l}$ of $\epsilon$ radians is not parallel to any line through two vertices of $Q$. Condition (i) can be met because of the strict convexity of $Q_1$ and $Q_2$. Let $\vec{d_{\epsilon}}$ be the oriented straight line obtained by clockwise rotating $\vec{d}$ by $\epsilon$ radians. We claim that $Q$ is $\vec{d_{\epsilon}}$-monotone. By Lemma~\ref{le:convex-is-monotone} and since every angle internal to $Q_1$ is smaller than $\pi-\epsilon$ radians, it follows that $Q_1$ is composed of two $\vec{d_{\epsilon}}$-monotone paths connecting $u$ and a vertex $x_1\neq v$. Analogously, $Q_2$ is composed of two $\vec{d_{\epsilon}}$-monotone paths connecting a vertex $x_2\neq u$ and $v$. Then $Q$ is composed of two $\vec{d_{\epsilon}}$-monotone paths connecting $x_2$ and $x_1$, which concludes the proof.
\end{proof}

{\bf Morphing.} A \emph{linear morph} \morph{\Gamma_1,\Gamma_2} between two straight-line planar drawings $\Gamma_1$ and $\Gamma_2$ of a plane graph $G$ moves each vertex at constant speed along a straight line from its position in $\Gamma_1$ to its position in $\Gamma_2$. A linear morph is {\em planar} if no crossing or overlap occurs between any two edges or vertices during the transformation. A linear morph is {\em convex} ({\em strictly-convex}) if it is planar and each face is delimited by a convex  (resp.\ strictly-convex) polygon at any time instant of the morph. A convex linear morph is called a {\em morphing step}. A {\em unidirectional} linear morph~\cite{bhl-mpgdum-13} is a linear morph in which the straight-line trajectories of the vertices are parallel. A \emph{convex morph} (a {\em strictly-convex morph}) \morph{\Gamma_s,\dots,\Gamma_t} between two convex drawings $\Gamma_s$ and $\Gamma_t$ of a plane graph $G$ is a finite sequence of convex (resp.\ strictly-convex) linear morphs that transforms $\Gamma_s$ into $\Gamma_t$. A {\em unidirectional (strictly-) convex morph} is such that each of its morphing steps is unidirectional.

%


\section{Decompositions of Convex Graphs} \label{se:decompositions}

Our morphing algorithm relies on a lemma stating that, roughly speaking, any convex graph has
a ``simple'' subgraph whose removal results in a smaller convex graph.
A similar result is known for a restricted graph class, namely the subdivisions of triconnected planar graphs. 

On the way to proving that every triconnected planar graph is the skeleton of a convex polytope in $\mathbb{R}^3$, Barnette and Gr\"unbaum~\cite{bg-stcc3p-69} proved that every subdivision of a triconnected planar graph $G$ can be decomposed as follows (see also~\cite{s-rc3clt-13}). Starting from $G$, repeatedly remove a path whose internal vertices have degree two in the current graph, until a subdivision of $K_4$ is obtained. 
Barnette and Gr\"unbaum proved that 
there is such a decomposition in which
every intermediate graph is a subdivision of a simple triconnected plane graph.

We now present a lemma that generalizes Barnette and Gr\"unbaum's decomposition technique so that it applies to convex (not necessarily triconnected) graphs. 

\begin{lemma}\label{le:bg-construction-internally}
Let $G$ be a convex graph. There exists a sequence $G_1,\dots,G_{\ell}$ of graphs such that: (i) $G_1=G$; (ii) $G_{\ell}$ is the simple cycle $C$ delimiting the outer face of $G$; (iii) for each $1\leq i\leq \ell$, graph $G_i$ is a subgraph of $G$ and is a subdivision of a simple internally triconnected plane graph $H_i$; and (iv) for each $1\leq i< \ell$, graph $G_{i+1}$ is obtained either:
\begin{itemize}
\item by deleting the edges and the internal vertices of a path $(u_1,u_2,\dots,u_k)$ with $k\geq 2$ from $G_i$, where $u_2,\dots,u_{k-1}$ are degree-$2$ internal vertices of $G_i$; or
\item by deleting a degree-$3$ internal vertex $u$ of $G_i$ as well as the edges and the internal vertices of three paths $P_1$, $P_2$, and $P_3$ connecting $u$ with three vertices of the cycle $C$ delimiting the outer face of $G$, where $P_1$, $P_2$, and $P_3$ are vertex-disjoint except at $u$ and the internal vertices of $P_1$, $P_2$, and $P_3$ are degree-$2$ internal vertices of $G_i$. 
\end{itemize}
\end{lemma}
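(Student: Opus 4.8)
The plan is to induct on the number of edges of $G$, using Theorem~\ref{th:convex-characterization} to reduce everything to the purely combinatorial statement about subdivisions of internally triconnected plane graphs. Write $G$ as a subdivision of an internally triconnected plane graph $H$ (Theorem~\ref{th:convex-characterization}), and let $C$ be the cycle delimiting the outer face. If $H=K_3$ (so $G$ is already the cycle $C$) we are done with $\ell=1$. Otherwise it suffices to exhibit one subgraph $G'$ of $G$ of the two prescribed types whose removal leaves a graph $G_2\subseteq G$ that is again a subdivision of a \emph{simple internally triconnected} plane graph having one fewer ``branch vertex-path''; the inductive hypothesis applied to $G_2$ then produces $G_3,\dots,G_\ell$, and prepending $G_1=G$ finishes the sequence. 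Since deletions only remove edges, condition (iii) ''$G_i\subseteq G$'' is automatically maintained.

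The core combinatorial step is the generalization of Barnette--Gr\"unbaum. Contract each maximal path of subdivision vertices so that we work directly with $H$; we must find in $H$ either (a) an edge $e$ whose removal leaves an internally triconnected plane graph, or (b) a degree-$3$ internal vertex $u$ such that deleting $u$ and its three incident edges leaves an internally triconnected plane graph. The idea is to adapt the classical argument: look at a ``peripheral'' structure near the outer face. The Barnette--Gr\"unbaum proof removes a path of a triconnected graph keeping triconnectivity by choosing the path inside a face incident to, but not equal to, the outer face, and using Menger-type arguments to show the removal cannot create a separation pair. Here ''internally triconnected'' means that adding an apex vertex $v_\infty$ adjacent to all of $C$ yields a triconnected graph $H^+$; so apply the Barnette--Gr\"unbaum path-removal to $H^+$ while forbidding the removal of $v_\infty$ and of edges incident to $v_\infty$. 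A removable path of $H^+$ that avoids $v_\infty$ corresponds to an edge of $H$ (case (a)); the only way Barnette--Gr\"unbaum could force us to touch $v_\infty$ is if the only available degree-two-after-contraction path passes through $v_\infty$, i.e.\ the relevant internal vertex of $H$ has degree $3$ with all three ``sides'' leading to $C$ — this is exactly case (b), where we must remove the vertex together with its three paths to $C$. One then checks that removing such a degree-$3$ vertex $u$ from $H$ (equivalently, removing $u$ and its three edges to $v_\infty$'s neighbours in $H^+$, then suppressing) cannot create a cut pair in the apex graph, again by a short Menger/connectivity argument on $H^+$.

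Finally one translates back to subdivisions: an edge $e=(a,b)$ of $H$ removed in case (a) corresponds to a path $(u_1=a,u_2,\dots,u_k=b)$ in $G$ whose internal vertices $u_2,\dots,u_{k-1}$ are the degree-$2$ subdivision vertices on that edge, matching the first bullet; and a degree-$3$ vertex $u$ of $H$ with its three incident edges corresponds to the degree-$3$ internal vertex $u$ of $G$ together with three internally-disjoint paths $P_1,P_2,P_3$ of degree-$2$ subdivision vertices, which by internal triconnectivity of $H$ must reach three distinct vertices of $C$, matching the second bullet. We must also verify that $u$ and all these subdivision vertices are \emph{internal} to $G$: a degree-$2$ vertex on the outer cycle $C$ should never be selected, and a degree-$3$ vertex of $C$ likewise should not, because removing it would disconnect $C$ and destroy the plane-graph structure promised in (ii) and (iii); this is ensured by performing the whole argument inside the apex graph $H^+$, where $C$'s vertices all have the extra neighbour $v_\infty$ and hence are never the ``free'' low-degree vertices the removal procedure latches onto.

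The main obstacle I expect is the case analysis showing that case (b) — removing a degree-$3$ internal vertex with its three paths to the outer cycle — is always available whenever no single removable path (case (a)) exists, and that this removal genuinely preserves simplicity and internal triconnectivity rather than merely biconnectivity. Handling small or degenerate intermediate graphs (e.g.\ when $H$ becomes a triangle, or when the three paths $P_1,P_2,P_3$ share endpoints on $C$, or when removing $u$ would create parallel edges forcing an extra suppression) will require care, and is where the bulk of the detailed verification lies.
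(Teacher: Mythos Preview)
Your apex-graph idea is natural and genuinely different from the paper's route, but the proposal has two real gaps that prevent it from going through as stated.

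First, the dichotomy you assert is not established. Barnette--Gr\"unbaum guarantees \emph{some} removable edge in $H^+$, but gives no a~priori control over which one; you need a version that lets you avoid both $v_\infty$ and every edge of $C$ (edges of $C$ are not incident to $v_\infty$, yet removing one would destroy condition~(ii)), and you neither cite nor prove such a refinement. The sentence ``the only way Barnette--Gr\"unbaum could force us to touch $v_\infty$ is if the only available degree-two-after-contraction path passes through $v_\infty$'' does not parse: $v_\infty$ has degree $|C|\ge 3$ in $H^+$ and is never itself a suppressed vertex, so it is unclear what scenario you are describing or why it would be the only obstruction.

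Second, and more seriously, your translation of case~(b) is wrong. You write that the three paths $P_1,P_2,P_3$ from a degree-$3$ internal vertex $u$ ``by internal triconnectivity of $H$ must reach three distinct vertices of $C$''. This is false: take a triangular prism embedded with a square face as the outer face; each of the two internal vertices has degree $3$, but one of its three neighbours is the other internal vertex, not a vertex of $C$. Yet the lemma's second bullet explicitly requires the three paths to terminate on $C$. The paper arranges this by invoking case~(b) only at the final non-trivial step of each phase, after the current graph (or, in the non-triconnected case, the relevant split component together with the rest of $C$) has already been reduced to a subdivision of $K_4$ whose unique internal branch vertex is joined to $C$ by three paths. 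Reaching that configuration is the whole content of the argument, and the paper does it by building the sequence \emph{backwards} from $C$, adding one path at a time under a carefully maintained invariant (no ``shortcut'' path in $H_i$ between the endpoints of an internal edge of $H_j$) that also handles simplicity. Your forward, edge-at-a-time induction does not produce this structure, and the final paragraph of your proposal correctly identifies exactly the places where it breaks.
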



\begin{proof}
Set $G_1=G$. Suppose that a sequence $G_1,\dots,G_i$ has been determined. If $G_i=G_{\ell}$ is the cycle delimiting the outer face of $G$, then we are done. Otherwise, we distinguish two cases, based on whether $G_i$ is a subdivision of a triconnected plane graph or not.


Suppose first that {\em $G_i$ is a subdivision of a triconnected plane graph $H_i$}. We construct graphs $G_i,\dots,G_{\ell}$ one by one, in reverse order. Throughout the construction, we maintain the following invariant for every $\ell\geq j>i$. Suppose that $H_j$ contains an internal edge $(u,v)$ that is also an edge of $H_i$. Then there exists no path in $H_i$ that connects $u$ and $v$, that is different from edge $(u,v)$, and all of whose internal vertices are not in $H_j$.

\begin{figure}[t]
\begin{center}
\mbox{\includegraphics[scale=0.5]{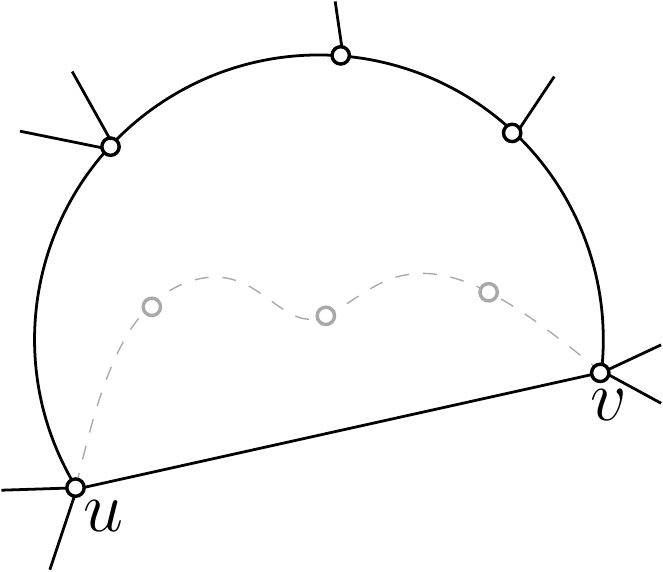}}
\caption{Illustration for the inviariant in the proof of Lemma~\ref{le:bg-construction-internally}. If $H_j$ (solid black lines) contains an internal edge $(u,v)$ that is also an edge of $H_i$, then $H_i$ contains no path that connects $u$ and $v$, that is different from edge $(u,v)$, and all of whose internal vertices are not in $H_j$ (like the gray dashed path in the illustration).}
\label{fig:grunbaum-invariant}
\end{center}
\end{figure}

Let $G_{\ell}$ be the cycle $C$ delimiting the outer face of $G_i$. Next, we determine $G_{\ell-1}$ (see Fig.~\ref{fig:grunbaum-triconnected}(a)). Let $C_i$ be the cycle delimiting the outer face of $H_i$. Since $H_i$ is triconnected and has at least four vertices, there exist three paths that connect an internal vertex $v$ of $H_i$ with vertices of $C_i$, that share no vertices other than $v$, and whose internal vertices are not in $C_i$ (see Theorem~5.1 in~\cite{t-pdfig-80}). Among all the triples of paths with these properties, choose a triple $(P_x,P_y,P_z)$ involving the largest number of vertices of $H_i$. Paths $P_x$, $P_y$, and $P_z$ and cycle $C_i$ form a graph $G_{\ell -1}^H$ that is a subdivision of $K_4$. The subgraph $G_{\ell -1}$ of $G_i$ corresponding to $G_{\ell -1}^H$ is hence a subdivision of $K_4$ in which $v$ is the only degree-$3$ internal vertex. The invariant is satisfied since $P_x$, $P_y$, and $P_z$ involve the largest number of vertices of $H_i$. Further, $G_{\ell}$ is obtained from $G_{\ell-1}$ by deleting a degree-$3$ internal vertex $v$ of $G_{\ell -1}$ as well as the edges and the internal vertices of $P_x$, $P_y$, and $P_z$, as required by the lemma.

\begin{figure}[t]
\begin{center}
\begin{tabular}{c c c c}
\mbox{\includegraphics[scale=0.4]{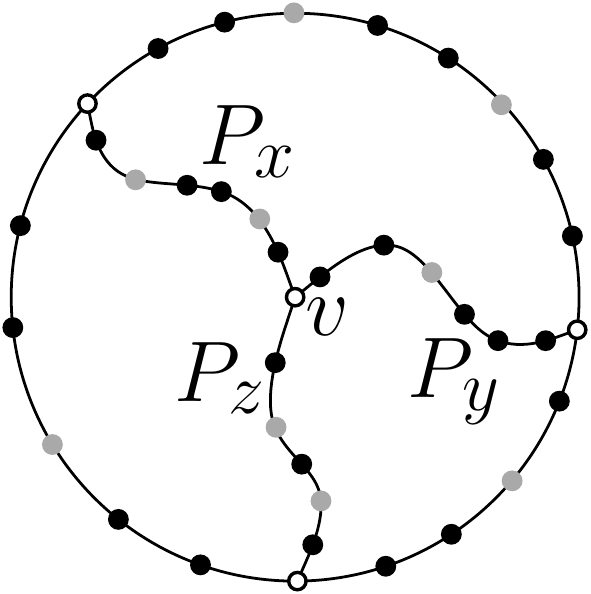}} \hspace{2mm} &
\mbox{\includegraphics[scale=0.4]{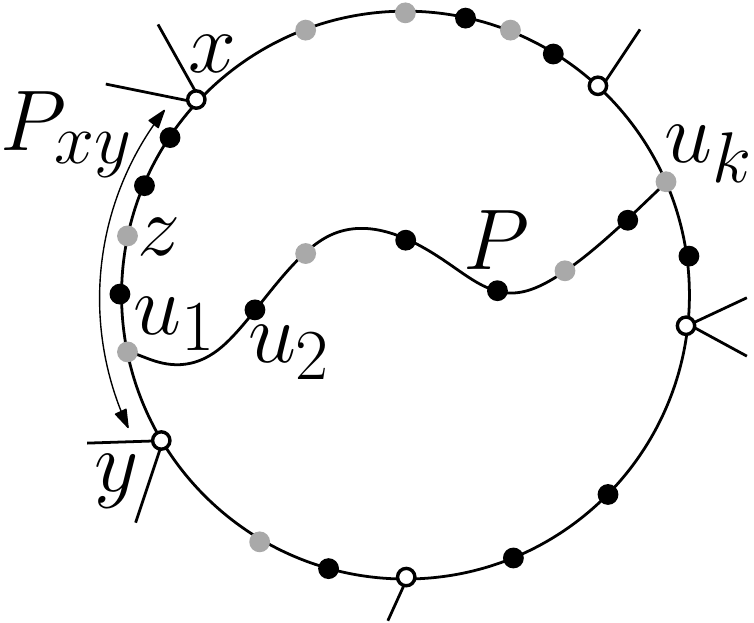}} \hspace{2mm} &
\mbox{\includegraphics[scale=0.4]{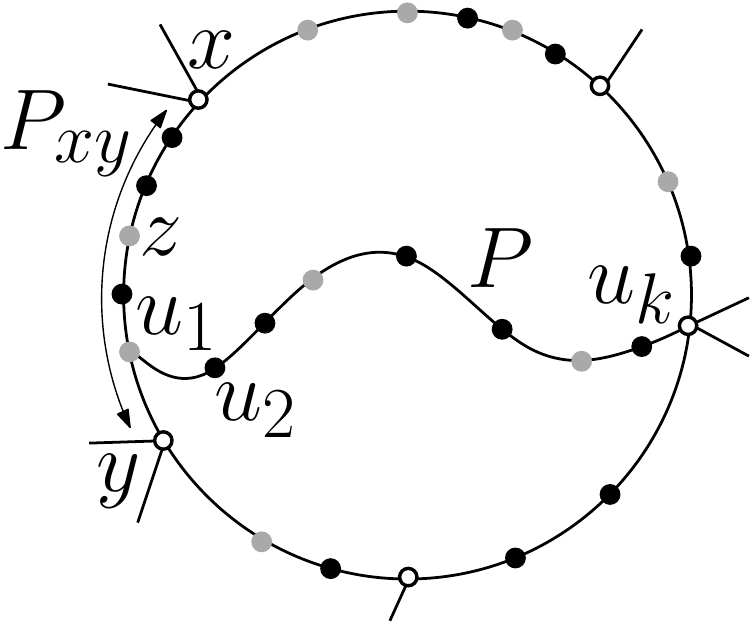}} \hspace{2mm} &
\mbox{\includegraphics[scale=0.4]{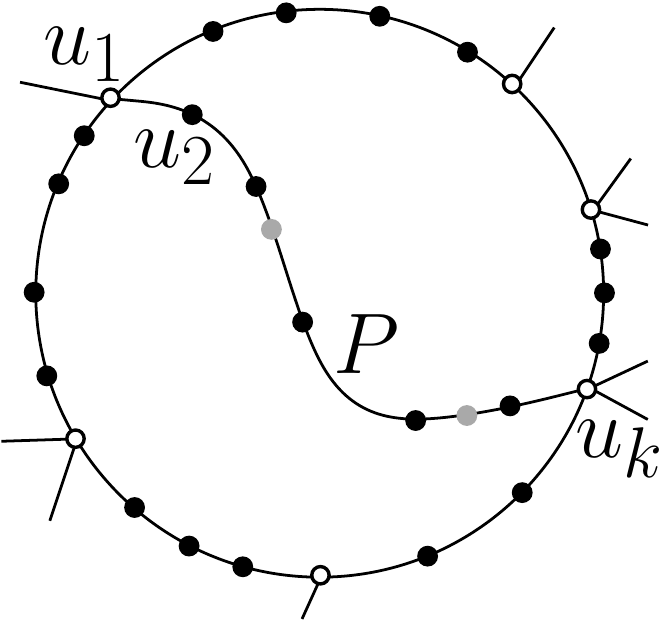}} \\
(a) \hspace{2mm} & (b) \hspace{2mm} & (c) \hspace{2mm} & (d)
\end{tabular}
\caption{Illustration for the proof of Lemma~\ref{le:bg-construction-internally} if $G_i$ is a subdivision of a triconnected plane graph $H_i$. White vertices belong to $H_j$, $G_j$, $H_i$, and $G_i$; grey vertices belong to $G_j$, $H_i$, and $G_i$, and not to $H_j$; black vertices belong to $G_j$ and $G_i$, and not to $H_j$ and $H_i$. (a) Graph $G_{\ell -1}$. (b)--(d) Graph $G_{j}$ and path $P$, together forming graph $G_{j-1}$; (b) and (c) illustrate Case (A) with $u_k$ having degree two and greater than two in $G_j$, respectively, while (d) depicts Case (B).}
\label{fig:grunbaum-triconnected}
\end{center}
\end{figure}

Next, assume that a sequence $G_{\ell},\dots,G_{j}$ has been determined, for some $j\leq \ell-1$. If $G_j=G_i$, then we are done. Otherwise, $G_{j-1}$ is obtained by adding a path $P$ to $G_{j}$. The choice of $P$ distinguishes two cases (as in the proof of Theorem~2 in~\cite{bg-stcc3p-69}).

In Case (A), a vertex $z$ exists such that $\deg(G_{j},z)=2$ and $\deg(G_i,z) \geq 3$. Then, consider the unique path $P_{xy}$ in $G_{j}$ that contains $z$ as an internal vertex, whose internal vertices have degree two in $G_{j}$, and whose end-points $x$ and $y$ have degree at least three in $G_{j}$. Note that $(x,y)$ is an edge of $H_{j}$. Since $\{x,y\}$ is not a separation pair in $H_i$, there exists a path $P=(u_1,u_2,\dots,u_k)$ in $G_i$ such that $u_1$ is an internal vertex of $P_{xy}$, vertex $u_h$ does not belong to $G_j$, for every $2\leq h\leq k-1$, and $u_k$ is a vertex of $G_j$ not in $P_{xy}$. Choose the path with these properties involving the largest number of vertices of $H_i$. Observe that $u_k$ might have degree two (as in Fig.~\ref{fig:grunbaum-triconnected}(b)) or greater than two (as in Fig.~\ref{fig:grunbaum-triconnected}(c)) in $G_j$.

In Case (B), there exists no vertex $z$ such that $\deg(G_{j},z)=2$ and $\deg(G_i,z) \geq 3$ (see Fig.~\ref{fig:grunbaum-triconnected}(d)). Since $G_j$ is different from $G_i$, there exists a path $P=(u_1,u_2,\dots,u_k)$ in $G_i$ such that $u_1$ and $u_k$ belong to $H_j$, and $u_2,\dots,u_{k-1}$ do not belong to $G_j$. Also, a path $P$ satisfying these properties exists such that $u_1$ is an internal vertex of $H_i$ (otherwise $H_i$ would contain a separation pair composed of two external vertices). Choose a path $P$ involving the largest number of vertices of $H_i$, subject to the constraint that $u_1$ is an internal vertex of $H_i$.


In both cases, path $P$ has to be embedded inside a face $f$ of $G_j$, according to the plane embedding of $G_i$. Since $G_j$ contains the cycle delimiting the outer face of $G_i$, we have that $f$ is an internal face of $G_j$. Graph $G_{j-1}$ is obtained by inserting $P$ in $f$. Since $P$ and $G_j$ are subgraphs of $G_i$, graph $G_{j-1}$ is a subgraph of $G_i$. Also, it satisfies the invariant since $P$ is chosen as a path involving the largest number of vertices of $H_i$. It remains to prove that $G_{j-1}$ is a subdivision of a simple triconnected plane graph $H_{j-1}$. Let $H_{j-1}$ be the graph obtained from $G_{j-1}$ by replacing each maximal path whose internal vertices have degree two with a single edge. Thus, $G_{j-1}$ is a subdivision of $H_{j-1}$.

\begin{claimx} \label{cl:decomposition}
Graph $H_{j-1}$ is plane, simple, and triconnected.
\end{claimx}

\begin{proof}
{\em Graph $H_{j-1}$ is a plane graph:} This follows from the fact that $G_{j-1}$ is a plane graph.

{\em Graph $H_{j-1}$ is simple:} In Case (A) graph $H_{j-1}$ is obtained by either: (i) subdividing edge $(x,y)$ of $H_j$ with one subdivision vertex $u_1$ and inserting edge $(u_1,u_k)$, where $u_k$ is a vertex of $H_j$ different from $x$ and $y$; or (ii) subdividing edge $(x,y)$ of $H_j$ with one subdivision vertex $u_1$, subdividing an edge of $H_j$ different from $(x,y)$ with one subdivision vertex $u_k$, and inserting edge $(u_1,u_k)$. Hence, the edges that belong to $H_{j-1}$ and not to $H_j$ are not parallel to any edge of $H_{j-1}$. Then the fact that $H_{j}$ is simple implies that $H_{j-1}$ is simple.  In Case (B) graph $H_{j-1}$ is obtained by adding an edge $e=(u_1,u_k)$ between two vertices of $H_{j}$. Hence, since $H_{j}$ is simple, in order to prove that $H_{j-1}$ is simple as well it suffices to prove that no edge $e'=(u_1,u_k)$ belongs to $H_{j}$. Suppose, for a contradiction, that $H_j$ contains an edge $e'=(u_1,u_k)$. Since there exists no vertex $z$ such that $\deg(G_{j},z)=2$ and $\deg(G_i,z) \geq 3$, it follows that $e'$ is also an edge of $H_i$. Since $H_i$ is simple, $e$ is not an edge of $H_i$. It follows that $e$ corresponds to a path in $H_i$ whose internal vertices do not belong to $H_j$. However, this violates the invariant on $H_j$, a contradiction. 

{\em Graph $H_{j-1}$ is triconnected:} In Case (A) assume that both $u_1$ and $u_k$ are vertices of $H_{j-1}$ not in $H_j$; the case in which $u_k$ is a vertex of $H_j$ is easier to discuss. Suppose, for a contradiction, that graph $H_{j-1}$ contains a separation pair $\{a,b\}$. Denote by $c$ and $d$ vertices in different connected components of $H_{j-1}-\{a,b\}$. If $c=u_1$ and $d=u_k$, then $c$ and $d$ are adjacent in $H_{j-1}-\{a,b\}$, a contradiction. Then assume w.l.o.g. that $d\neq u_1,u_k$. We claim that $c$ is in a connected component of $H_{j-1}-\{a,b\}$ containing a vertex $c'$ of $H_j$. If $c\neq u_1,u_k$, then the claim follows with $c'=c$. Otherwise, assume that $c=u_1$ (the case in which $c=u_k$ is analogous). If $a=x$ and $b=y$, the claim follows with $c'$ being a neighbor of $u_k$ different from $x$ and $y$ (observe that $x$ and $y$ are not both neighbors of $u_k$, otherwise $H_j$ would have two parallel edges $(x,y)$). Finally, if $x\neq a,b$ or if $y\neq a,b$, then the claim follows respectively with $c'=x$ or $c'=y$. Then by the triconnectivity of $H_j$ there exists three paths connecting $c'$ and $d$ in $H_j$ sharing no vertices other than $c'$ and $d$. These paths belong to $H_{j-1}$ as well (one or two edges of these paths might be subdivided in $H_{j-1}$). Hence, at least one of these paths does not contain $a$ nor $b$, thus it belongs to $H_{j-1}-\{a,b\}$. Hence, $c'$ and $d$, and thus $c$ and $d$, are in the same connected component of $H_{j-1}-\{a,b\}$, a contradiction.  In Case (B), $H_{j-1}$ is obtained by adding an edge between two vertices of $H_{j}$. This, together with the fact that $H_{j}$ is triconnected, implies that $H_{j-1}$ is triconnected.
\end{proof}


We now turn to the case in which {\em $G_i$ is not a subdivision of a triconnected plane graph}. In this case $G_i$ is a subdivision of a simple internally triconnected plane graph $H_i$ with minimum degree three and containing some separation pairs. Recall that $H_i$ has either two or three split components with respect to any separation pair $\{u,v\}$.

Suppose that a separation pair $\{u,v\}$ exists in $H_i$ determining three split components. Since $H_i$ is internally triconnected, one of these split components is an internal edge $(u,v)$ of $H_i$ corresponding to a path $P=(u=u_1,\dots,u_k=v)$ in $G_i$, where $u_2,\dots,u_{k-1}$ are degree-$2$ internal vertices of $G_i$. Let $G_{i+1}=G_i-\{u_2,\dots,u_{k-1}\}$ and let $H_{i+1}=H_i-(u,v)$. Note that $G_{i+1}$ is a subdivision of $H_{i+1}$. Then $H_{i+1}$ is an internally triconnected simple plane graph, given that $H_{i}$ is an internally triconnected simple plane graph with three split components with respect to $\{u,v\}$.

Suppose next that every separation pair of $H_i$ determines two split components, as in Fig.~\ref{fig:separation-pairs}(a). Let $\{u,v\}$ be a separation pair of $H_i$ determining two split components $A$ and $B$ such that $A$ does not contain any separation pair of $H_i$ different from $\{u,v\}$, as in Fig.~\ref{fig:separation-pairs}(b), (e.g., let $\{u,v\}$ be a separation pair such that the number of vertices in $A$ is minimum among all separation pairs). Let $L$ be the subgraph of $H_i$ composed of $A$ and of the path $Q$ between $u$ and $v$ that delimits the outer face of $H_i$ and that belongs to $B$; see Fig.~\ref{fig:separation-pairs}(c). Let $D$ be the subgraph of $G_i$ corresponding to $L$; see Fig.~\ref{fig:separation-pairs}(d). The graph $M$ obtained from $L$ by replacing $Q$ with an edge $(u,v)$, shown in Fig.~\ref{fig:separation-pairs}(e), is triconnected, given that the vertex set of $A$ does not contain any separation pair of $H_i$ different from $\{u,v\}$. Thus, $D$ is a subdivision of a simple triconnected plane graph $M$.

\begin{figure}[t]
\begin{center}
\begin{tabular}{c}
\begin{tabular}{c c c c}
\mbox{\includegraphics[scale=0.45]{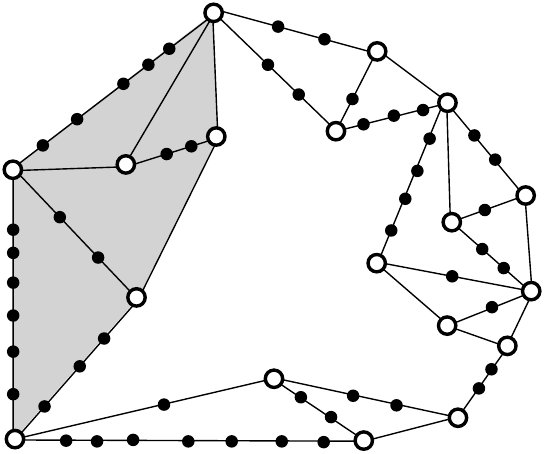}} \hspace{2mm} &
\mbox{\includegraphics[scale=0.45]{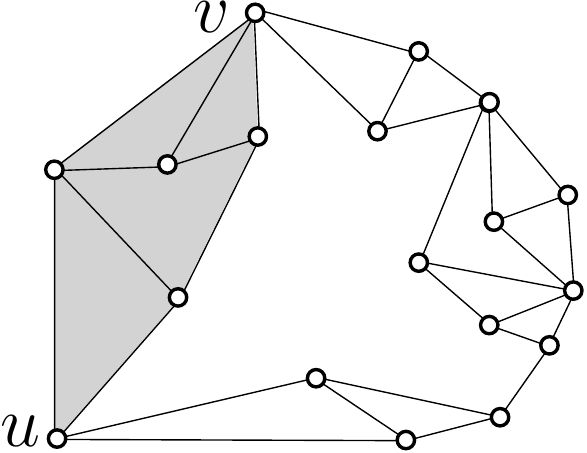}} \hspace{2mm} &
\mbox{\includegraphics[scale=0.45]{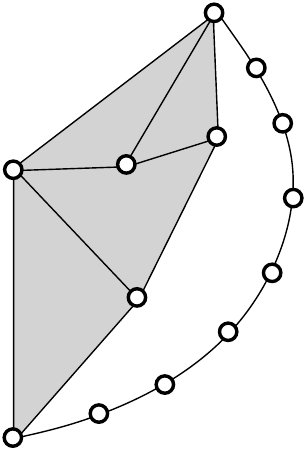}} \hspace{2mm} &
\mbox{\includegraphics[scale=0.45]{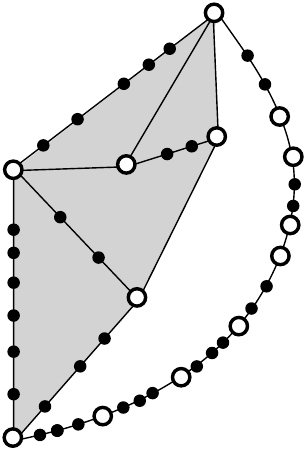}} \\
(a) \hspace{2mm} & (b) \hspace{2mm} & (c) \hspace{2mm} & (d)\vspace{3mm} \\ 
\mbox{\includegraphics[scale=0.45]{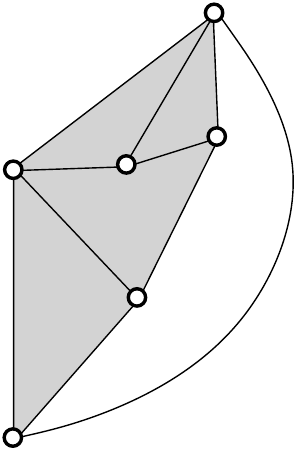}} \hspace{2mm} &
\mbox{\includegraphics[scale=0.45]{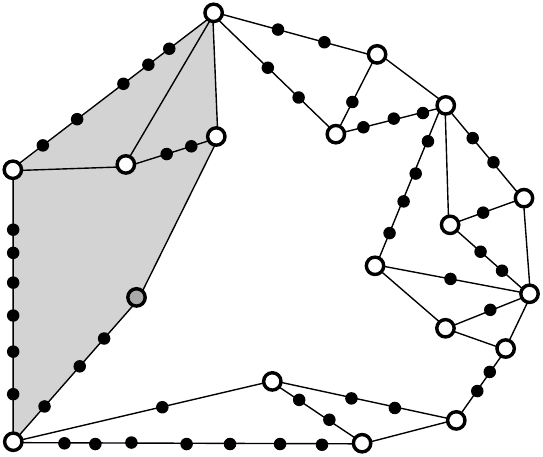}} \hspace{2mm} &
\mbox{\includegraphics[scale=0.45]{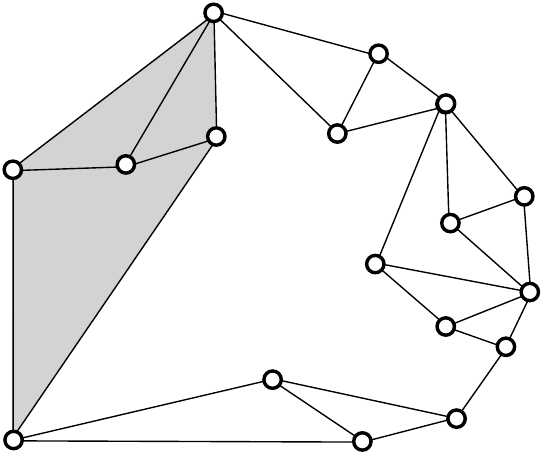}} \hspace{2mm} &
\mbox{\includegraphics[scale=0.45]{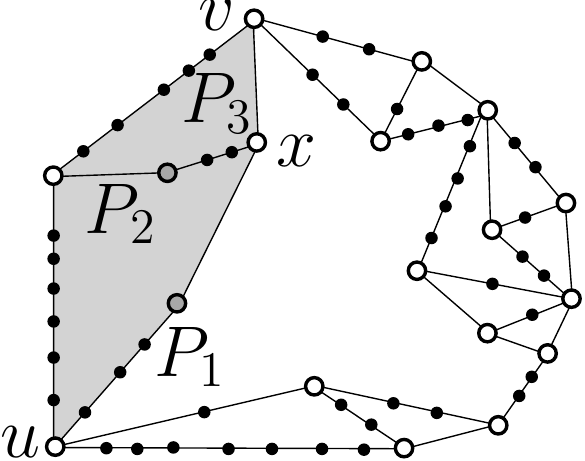}} \\
(e) \hspace{2mm} & (f) \hspace{2mm} & (g) \hspace{2mm} & (h)\\
\vspace{0.5mm}
\end{tabular}\\
\begin{tabular}{c c c}
\mbox{\includegraphics[scale=0.5]{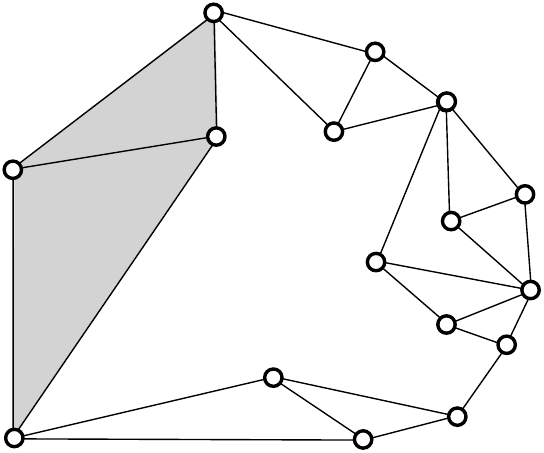}} \hspace{2mm} &
\mbox{\includegraphics[scale=0.5]{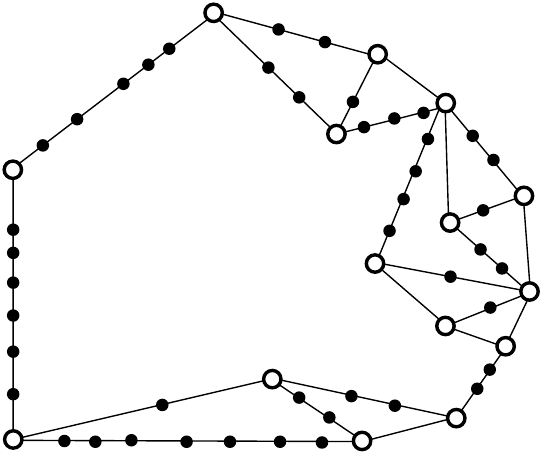}} \hspace{2mm} &
\mbox{\includegraphics[scale=0.5]{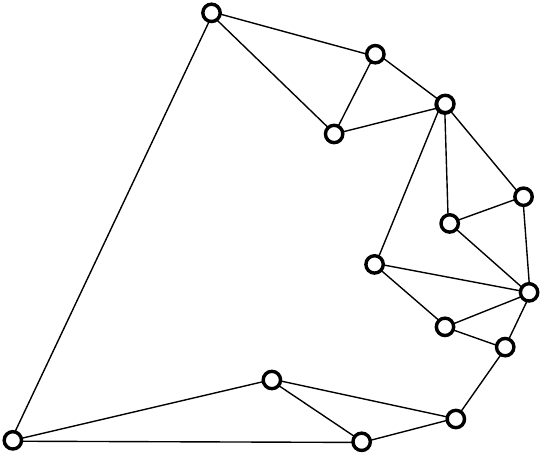}} \\
(i) \hspace{2mm} & (j) \hspace{2mm} & (k)\\
\end{tabular}\\
\end{tabular}
\caption{Illustration for the proof of Lemma~\ref{le:bg-construction-internally} if $G_i$ is not a subdivision of a triconnected plane graph $H_i$. The faces of $D_1,\dots,D_{m}$ not incident to $Q$ are colored gray in $G_i,\dots,G_{i+m-1}$. The faces of $M_1,\dots,M_{m}$ not incident to $(u,v)$ are colored gray in $H_i,\dots,H_{i+m-1}$. (a) Graph $G_i$. (b) Graph $H_i$ and separation pair $\{u,v\}$. (c) Graph $L$. (d) Graph $D=D_1$. (e) Graph $M=M_1$. (f) Graph $G_{i+1}$. (g) Graph $H_{i+1}$. (h) Graph $G_{i+2}$. (i) Graph $H_{i+2}$. (j) Graph $G_{i+3}$. (k) Graph $H_{i+3}$.}
\label{fig:separation-pairs}
\end{center}
\end{figure}

By means of the same algorithm described in the case in which $G_i$ is a subdivision of a triconnected plane graph, we determine a sequence $D_1,\dots,D_m$ of subdivisions of triconnected plane graphs $M_1,\dots,M_m$, where $D_1=D$, $M_1=M$, and $M_m=K_3$. Further, we define a sequence $H_{i+1},\dots,H_{i+m-1}$ of graphs where, for each $2\leq j\leq m-1$, graph $H_{i+j-1}$ is obtained from $H_i$ by replacing $M$ with $M_j$ (see Figs.~\ref{fig:separation-pairs}(b),~\ref{fig:separation-pairs}(g), and~\ref{fig:separation-pairs}(i)), and where $H_{i+m-1}$ is obtained from $H_i$ by replacing $M$ with an edge $(u,v)$ (see Fig.~\ref{fig:separation-pairs}(k)). Analogously, we define a sequence $G_{i+1},\dots,G_{i+m-1}$ of graphs where, for each $2\leq j\leq m$, graph $G_{i+j-1}$ is obtained from $G_i$ by replacing $D$ with $D_j$ (see Figs.~\ref{fig:separation-pairs}(a),~\ref{fig:separation-pairs}(f),~\ref{fig:separation-pairs}(h), and~\ref{fig:separation-pairs}(j)). Then, for each $2\leq j\leq m$, graph $G_{i+j-1}$ is a subdivision of $H_{i+j-1}$.  Further, for each $1\leq j\leq m-2$, graph $G_{i+j}$ is obtained from $G_{i+j-1}$ by deleting the edges and the internal vertices of a path $(u_1,\dots,u_k)$ with $k\geq 2$, where $u_2,\dots,u_{k-1}$ are degree-$2$ internal vertices of $G_{i+j-1}$. Moreover, graph $G_{i+m-1}$ is obtained by deleting from $G_{i+m-2}$ a degree-$3$ internal vertex $x$ as well as the edges and the internal vertices of three paths $P_1$, $P_2$, and $P_3$, as required by the lemma. Finally, since $M_2,\dots,M_m$ are simple triconnected plane graphs, $H_{i+1},\dots,H_{i+m-1}$ are simple internally triconnected plane graphs. 

Note that $H_{i+m-1}$ is obtained from $H_i$ by replacing $A$ with edge $(u,v)$, hence $\{u,v\}$ is not a separation pair in $H_{i+m-1}$. Thus, the repetition of the described transformations over different separation pairs $\{u,v\}$ eventually leads to a graph $G_x$ that is the subdivision of a simple triconnected plane graph $H_x$; then a sequence $G_x,\dots,G_{\ell}$ of subdivisions of triconnected plane graphs such that $G_{\ell}$ is a subdivision of $K_3$ is determined as above.
\end{proof}


\section{Convex Drawings of Hierarchical Convex Graphs} \label{se:hierarchical}

A {\em hierarchical graph} is a tuple $(G,\vec{d},L,\gamma)$ where $G$ is a graph, $\vec{d}$ is an oriented straight line in the plane, $L$ is a set of parallel lines orthogonal to $\vec{d}$, and $\gamma$ is a function that maps each vertex of $G$ to a line in $L$ so that adjacent vertices are mapped to distinct lines. The lines in $L$ are ordered as they are encountered when traversing $\vec{d}$ according to its orientation (we write $l_1<l_2$ if a line $l_1$ precedes a line $l_2$ in $L$). Furthermore, each line $l_i\in L$ is oriented so that $\vec{d}$ cuts $l_i$ from the right to the left of $l_i$; a point $a$ {\em precedes} a point $b$ on $l_i$ if $a$ is encountered before $b$ when traversing $l_i$ according to its orientation. For the sake of readability, we will often write $G$ instead of $(G,\vec{d},L,\gamma)$ to denote a hierarchical graph. A {\em level drawing} of a hierarchical graph $G$ maps each vertex $v$ to a point on the line $\gamma(v)$ and each edge $(u,v)$ of $G$ with $\gamma(u)<\gamma(v)$ to an arc $\vec{uv}$ monotone with respect to $\vec{d}$. 
A hierarchical graph $G$ with a prescribed plane embedding is a {\em hierarchical plane graph} if there is a level planar drawing $\Gamma$ of $G$ that respects the prescribed plane embedding.
A path $(u_1,\dots,u_k)$ in $G$ is \emph{monotone} if $\gamma(u_i)<\gamma(u_{i+1})$, for $1\leq i\leq k-1$. An {\em st-face} in a hierarchical plane graph $G$ is a face delimited by two monotone paths connecting two vertices $s$ and $t$, where $s$ is the {\em source} and $t$ is the {\em sink} of the face. Furthermore, $G$ is a {\em hierarchical-st plane graph} if every face of $G$ is an st-face; note that a face $f$ of $G$ is an st-face if and only if the polygon delimiting $f$ in a straight-line level planar drawing of $G$ is $\vec{d}$-monotone. 

In this section we give an algorithm to construct strictly-convex level planar drawings of {\em hierarchical-st strictly-convex graphs}, that are hierarchical-st plane graphs $(G,\vec{d},L,\gamma)$ such that $G$ is a strictly-convex graph. We have the following.



\begin{theorem} \label{th:hong-nagamochi-revised}
Every hierarchical-st strictly-convex graph admits a drawing which is simultaneously strictly-convex and level planar.
\end{theorem}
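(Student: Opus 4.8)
The plan is to reduce to the internally triconnected case and then build the drawing vertex by vertex, in an order compatible with the levels.

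\emph{Reduction.} By Theorem~\ref{th:strictly-convex-characterization}, $G$ is a subdivision of an internally triconnected plane graph $H$ all of whose subdivision vertices have degree two and lie on the outer cycle $C$. View $H$ as a hierarchical graph by restricting $\gamma$ to $V(H)$; contracting the degree-two vertices of a monotone path preserves monotonicity of that path, so every face of $H$ is still an st-face and $H$ is a hierarchical-st internally triconnected graph. Suppose we have a strictly-convex level planar drawing $\Gamma_H$ of $H$: then its outer boundary is a strictly convex polygon, and each subdivided edge $ab$ of $H$ is one of its sides, drawn as a $\vec{d}$-monotone segment. We replace $ab$ by the chain through its subdivision vertices $w_1,\dots,w_{m-1}$ (ordered by increasing level, as the subdivided path is monotone), placing $w_j$ on its line at the point obtained from the segment $ab$ by displacing it into the outer face by an amount proportional to $(\ell(w_j)-\ell(a))(\ell(b)-\ell(w_j))$, where $\ell(\cdot)$ denotes the level. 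This quantity is strictly concave in the level, so for a sufficiently small proportionality constant the chain is $\vec{d}$-monotone and strictly convex, stays disjoint from the rest of $\Gamma_H$, makes every $w_j$ concave in the outer face and convex in its incident internal face, and changes no other angle. Hence it suffices to treat the case $G=H$ internally triconnected.

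\emph{Incremental construction.} Being hierarchical-st, $G$ has a unique source $s$ and a unique sink $t$ and is st-planar. Fix an ordering $v_1=s,v_2,\dots,v_n=t$ of $V(G)$ that is nondecreasing in level and is at the same time a canonical-type ordering of the internally triconnected plane graph $G$, so that the neighbours of $v_{k+1}$ in $G_k:=G[\{v_1,\dots,v_k\}]$ form a contiguous sub-path $\pi_{k+1}$ of the ``upper boundary'' of $G_k$; such an ordering exists because $G$ is hierarchical-st and internally triconnected. We build strictly-convex level planar drawings $\Gamma_k$ of $G_k$ by induction on $k$: the base case is a strictly-convex $\vec{d}$-monotone polygon obtained by an explicit strictly-concave choice of along-line coordinates (one boundary path bulging left, the other right); and given $\Gamma_k$, we place $v_{k+1}$ on its line sufficiently far above $\Gamma_k$, with along-line coordinate chosen generically, so that the edges from $v_{k+1}$ to $\pi_{k+1}$ are $\vec{d}$-monotone, the internal faces created below $v_{k+1}$ are strictly convex, and every vertex of $\pi_{k+1}$ that becomes internal at this step acquires strictly convex angles. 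Then $\Gamma_n$ is the desired drawing. (If $G$ is internally triconnected but not triconnected, first split it along its separation pairs into triconnected pieces, as in the proof of Lemma~\ref{le:bg-construction-internally}, draw each piece as above, and glue the drawings along their shared boundary paths; gluing two strictly-convex pieces into a strictly-convex whole is possible by an argument in the spirit of Lemma~\ref{le:two-polygons-monotone}.)

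\emph{Main obstacle.} The hard part is the incremental step, and specifically producing \emph{strict} convexity --- this is exactly what goes beyond the convex level planar drawings of Hong and Nagamochi~\cite{hn-cdhpgcpg-10}. Monotonicity of the new edges is automatic from the level order, and planarity and level planarity are obtained as usual by placing $v_{k+1}$ high enough; but a convex drawing tolerates flat vertices and so has ample slack, whereas forbidding flatness couples the position chosen for $v_{k+1}$ with the angle it creates at each of its (possibly many) lower neighbours and with the angles at the vertices of $\pi_{k+1}$ that become internal. Showing that a good position always exists needs a quantitative argument using that every internal vertex of $G$ has degree at least three --- so that its incident angles, which sum to $2\pi$, can all be made strictly less than $\pi$ --- together with the fact that a single canonical-type ordering can be chosen that is compatible with the levels and keeps the ``umbrella'' structure valid throughout; the same subtleties reappear at the gluing step in the non-triconnected case. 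An alternative route would be to perturb a convex level planar drawing obtained from~\cite{hn-cdhpgcpg-10}, but the perturbation must move each vertex only along its own line and keep every edge $\vec{d}$-monotone, which makes that route comparably delicate.
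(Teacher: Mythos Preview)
Your proposal takes a route that is genuinely different from the paper's, and the difference is telling. The paper does \emph{not} build the drawing incrementally; it takes precisely the ``alternative route'' you mention at the end and dismiss as comparably delicate. It starts from a (non-strictly) convex level planar drawing $\Gamma$ produced by Hong and Nagamochi~\cite{hn-cdhpgcpg-10}, with the outer polygon already chosen strictly convex, and then repeatedly perturbs $\Gamma$ along the level lines to eliminate flat angles. The key idea that makes the perturbation work is a careful choice of \emph{which} flat angles to fix first: the paper defines left-flat (and right-flat) maximal paths, and among all left-flat paths picks one, $Q^*$, whose ``right side'' subgraph $G_r(Q^*)$ contains no further internal left-flat path. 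Pushing the internal vertices of $Q^*$ slightly to the right onto a flat circular arc then makes the angles along $Q^*$ strictly convex, while the minimality of $G_r(Q^*)$ guarantees that no angle on the other side becomes concave. An induction on the number of flat angles finishes the argument. So the delicacy you anticipate is real, but it is handled by this ordering trick rather than avoided.

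Your own incremental plan has a genuine gap at exactly the point you flag as the main obstacle. The phrase ``place $v_{k+1}$ on its line sufficiently far above $\Gamma_k$'' cannot be made sense of in the level setting: the line $\gamma(v_{k+1})$ is part of the input, so the height of $v_{k+1}$ is fixed and may be arbitrarily close to the heights of its lower neighbours. Your only freedom is the along-line coordinate, and you give no argument that a single along-line choice can simultaneously make all the newly created internal angles strictly convex when the levels are adversarially close. Standard canonical-ordering algorithms (de Fraysseix--Pach--Pollack, Kant) overcome this with shifting in \emph{both} coordinates, which is unavailable here; you would at least need a horizontal-shifting scheme and an argument that it preserves all previously achieved strict-convexity, none of which is present. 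The asserted existence of a canonical-type ordering compatible with the levels, and the gluing of triconnected pieces into a strictly-convex whole, are further steps that would each require work beyond ``in the spirit of Lemma~\ref{le:two-polygons-monotone}'', which concerns monotonicity of a single polygon, not merging strictly-convex drawings. In short, the plan is plausible in outline but the inductive step is not established; the paper's perturbation argument sidesteps all of this by starting from a global convex drawing and fixing flatness locally in a well-chosen order.
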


\begin{proof}
Let $(G,\vec{d},L,\gamma)$ be a hierarchical-st strictly-convex graph, in the following simply denoted by $G$, and let $C$ be the cycle delimiting the outer face $f$ of $G$. Construct a strictly-convex level planar drawing $P_C$ of $C$ in which the clockwise order of the vertices along $P_C$ is the same as prescribed in $G$. Hong and Nagamochi~\cite{hn-cdhpgcpg-10} showed an algorithm to construct a (non-strictly) convex level planar drawing $\Gamma$ of $G$ in which $C$ is represented by $P_C$. We show how to modify $\Gamma$ into a strictly-convex level planar drawing of $G$.
	
We give some definitions. Let $s$ and $t$ be the vertices of $G$ such that $\gamma(s)<\gamma(u)<\gamma(t)$, for every vertex $u\neq s,t$ of $G$. Given a vertex $v$ of $G$, the {\em leftmost (rightmost) top neighbor} of $v$ is the neighbor $x$ of $v$ with $\gamma(x)>\gamma(v)$ such that for the neighbor $y$ of $v$ counter-clockwise (clockwise) following $x$ we have that either $\gamma(y)<\gamma(v)$, or $\gamma(y)>\gamma(v)$ and both $x$ and $y$ are incident to $f$ (this only happens when $v=s$). The {\em leftmost} and the  {\em rightmost bottom neighbor} of $v$ are defined analogously. Also, the {\em leftmost (rightmost) top path} of $v$ is the monotone path $P$ from $v$ to $t$ obtained by initializing $P=(v)$ and by repeatedly adding the leftmost (resp.\ rightmost) top neighbor of the last vertex. The {\em leftmost} and {\em rightmost bottom path} of $v$ are defined analogously. 
Let $v$ be a vertex of $G$ that is flat in a face $g$ of $\Gamma$; $v$ is an internal vertex of $G$, since  $P_C$ is strictly-convex. Let $x$ and $y$ be the neighbors of $v$ in $g$; then either $\gamma(x) < \gamma(v) < \gamma(y)$ or $\gamma(y) < \gamma(v) < \gamma(x)$. Assume the former. If $g$ lies to the left of path $(x,v,y)$ when traversing it from $x$ to $y$, then we say that $v$ is a {\em left-flat vertex} in $\Gamma$, otherwise $v$ is a {\em right-flat vertex}. By Theorem~\ref{th:strictly-convex-characterization} and since $v$ is an internal vertex of $G$, we have $\deg(G,v)\geq 3$, hence $v$ cannot be both a left-flat and a right-flat vertex in $\Gamma$. A {\em left-flat (right-flat) path} in $\Gamma$ is a maximal path whose internal vertices are all left-flat (resp.\ right-flat) vertices and are all flat in the same face (see Fig.~\ref{fig:strictly-convex}(a)). Let $Q=(x,\dots,y)$ be a left-flat path in $\Gamma$; the {\em elongation} $E_Q$ of $Q$ is the monotone path between $s$ and $t$ obtained by concatenating the rightmost bottom path of $x$, $Q$, and the rightmost top path of $y$. Let $G_l(Q)$ ($G_r(Q)$) be the subgraph of $G$ whose outer face is delimited by the cycle composed of $E_Q$ and of the leftmost (resp. rightmost) top path of $s$. For a right-flat path $Q$ in $\Gamma$, the elongation $E_Q$ of $Q$, and graphs $G_l(Q)$ and $G_r(Q)$ are defined analogously. 

In order to modify $\Gamma$ into a strictly-convex level planar drawing of $G$, we proceed by induction on the number $a(\Gamma)$ of flat angles in $\Gamma$. If $a(\Gamma)=0$, then $\Gamma$ is strictly-convex and there is nothing to be done. If $a(\Gamma) \geq 1$, then there exists a path $Q$ that is either a left-flat path or a right-flat path in $\Gamma$. Assume the former, the other case is symmetric. Also, assume w.l.o.g.\ up to a rotation of the axes, that the lines in $L$ are horizontal. 

Ideally, we would like to move the internal vertices of $Q$ to the right, so that the polygon delimiting the face on which the internal vertices of $Q$ are flat becomes strictly-convex. There is one obstacle to such a modification, though: An internal vertex of $Q$ might be the first or the last vertex of a left-flat path $Q'$; thus, moving that vertex to the right would cause the polygon delimiting the face on which the internal vertices of $Q'$ are flat to become concave (in Fig.~\ref{fig:strictly-convex}(a) moving $u_i$ to the right causes an angle incident to $w_i$ to become concave). We now argue that there is a left-flat path $Q^*$ such that $G_r(Q^*)$ contains no internal left-flat path; then we modify $\Gamma$ by moving the internal vertices of $Q^*$ to the right.

Let $Q^*=(x,\dots,y)$ be a left-flat path such that the number of internal vertices of $G_r(Q^*)$ is minimum. Suppose, for a contradiction, that $G_r(Q^*)$ contains an internal left-flat path $Q'$. Then $G_r(Q')$ has less internal vertices than $G_r(Q^*)$, since $G_r(Q')$ is a subgraph of $G_r(Q^*)$ and the internal vertices of $Q'$ are internal vertices of $G_r(Q^*)$ and external vertices of $G_r(Q')$. This contradiction proves that $G_r(Q^*)$ does not contain any internal left-flat path.


	\begin{figure}[tb]
		\begin{center}
		\begin{tabular}{c c c}
			\mbox{\includegraphics[scale=0.4]{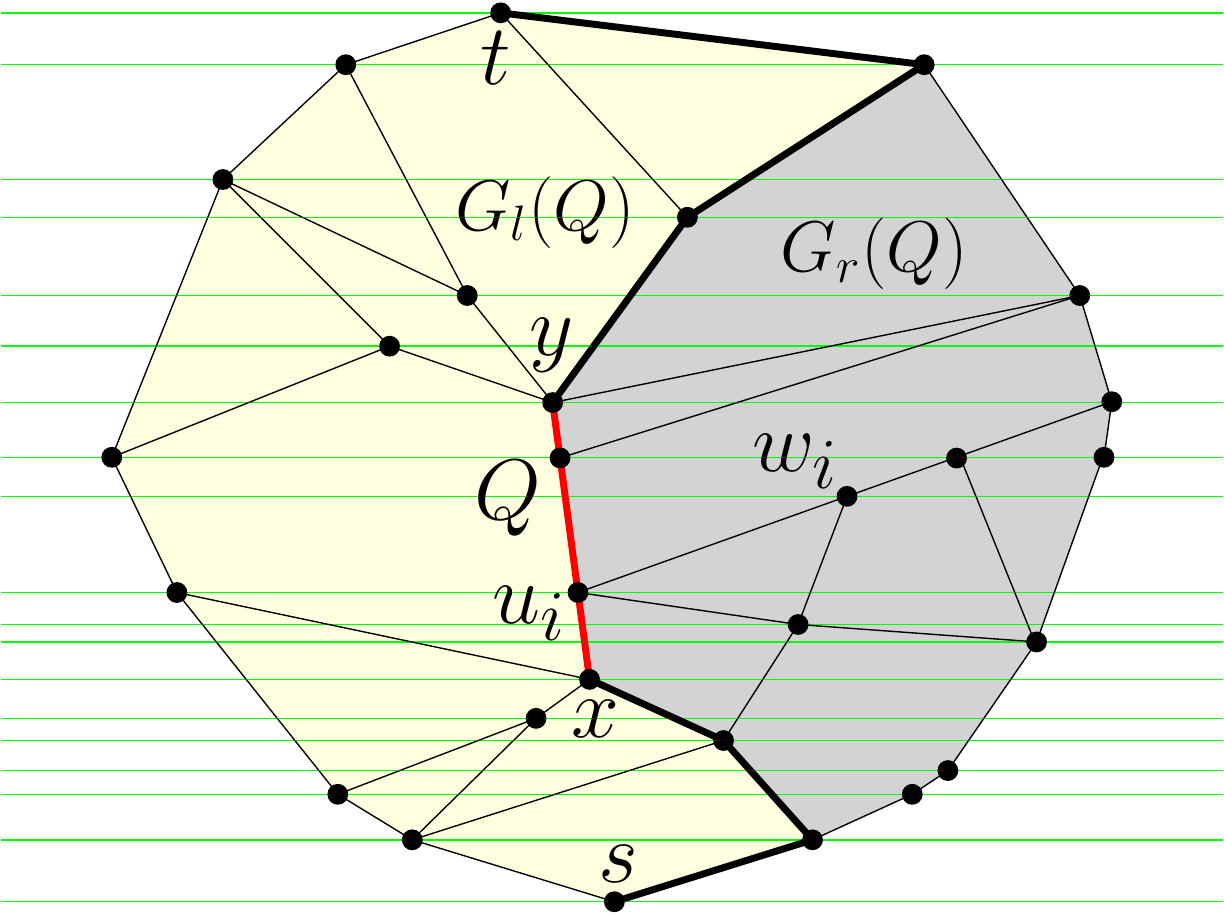}}  \hspace{2mm} &
			\mbox{\includegraphics[scale=0.4]{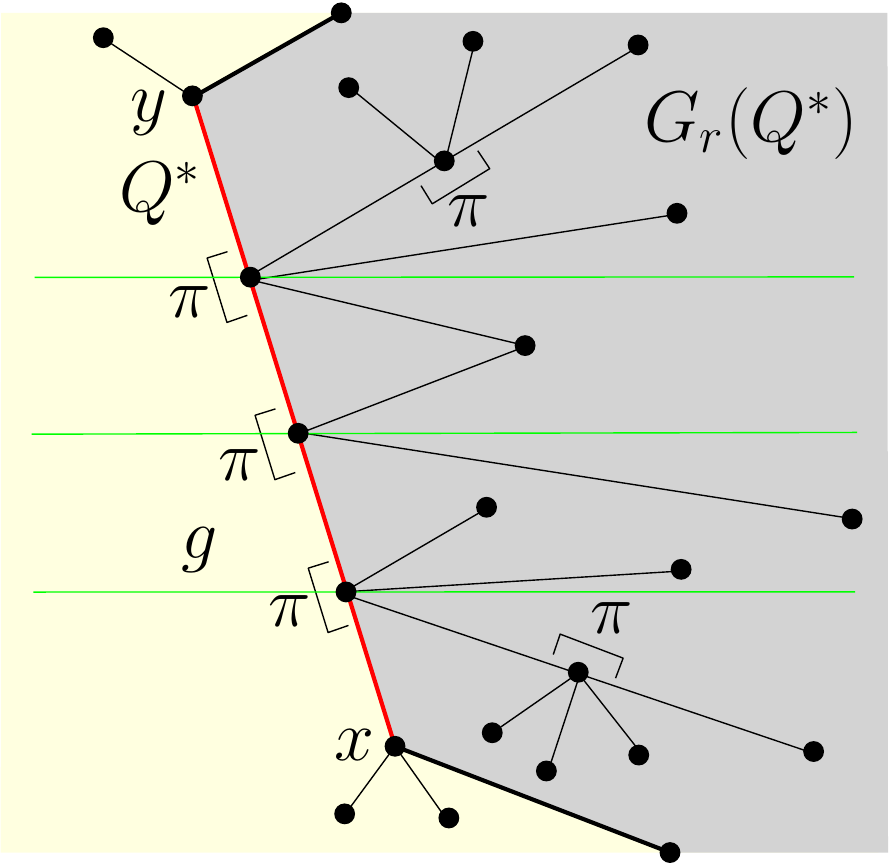}} \hspace{2mm} &
			\mbox{\includegraphics[scale=0.4]{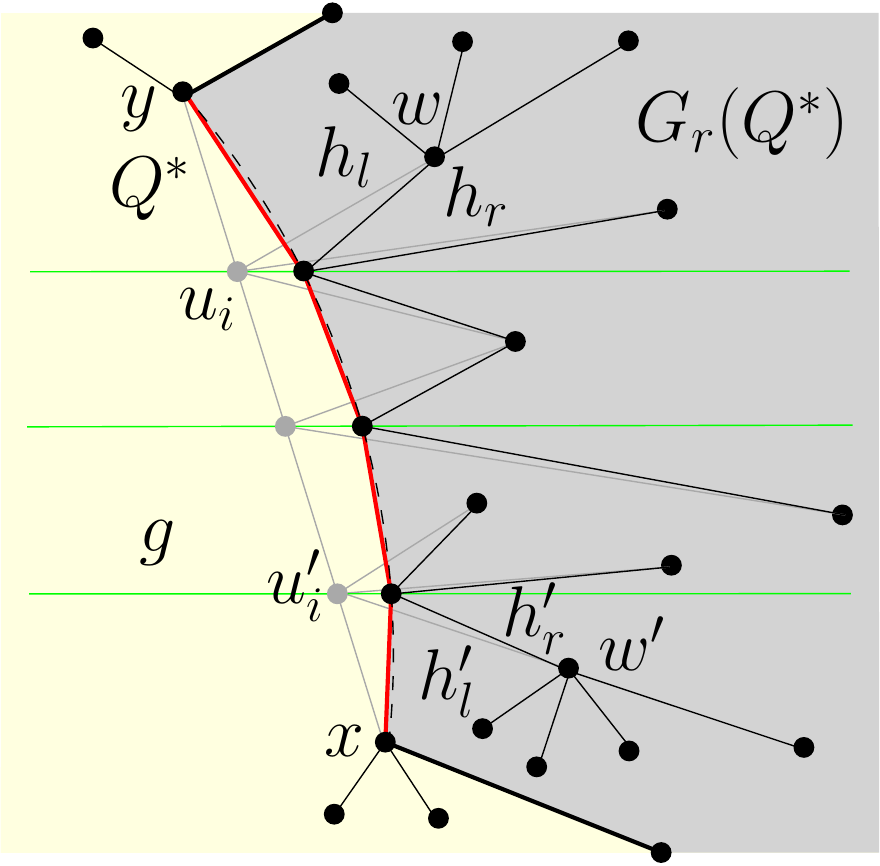}}\\
			(a) \hspace{2mm} & (b) \hspace{2mm} & (c)
		\end{tabular}
		\end{center}
		\caption{(a) A left-flat path $Q$ (red thick line), its elongation $E(Q)$ (red and black thick lines), graphs $G_r(Q)$ (gray) and $G_l(Q)$ (yellow). (b) Drawing $\Gamma$. (c) Drawing $\Gamma'$.}
			\label{fig:strictly-convex}
	\end{figure}
	
	We construct a convex drawing $\Gamma'$ of $G$ with $a(\Gamma')< a(\Gamma)$. Initialize $\Gamma'=\Gamma$ and remove the internal vertices of $Q^*$. Let $\epsilon>0$ be to be determined later. Consider segment $\overline{xy}$, its mid-point $z$, and a point $p$ in the half-plane to the right of $\overline{xy}$ such that segment $\overline{zp}$ is orthogonal to $\overline{xy}$ and has length $\epsilon$. Let $a$ be the arc of circumference between $x$ and $y$ passing through $p$. Place each internal vertex $v$ of $Q^*$ at the intersection point of $\gamma(v)$ with $a$, which exists since $Q^*$ is monotone. Denote by $\Gamma'$ the resulting drawing. We have the following.

\begin{claimx} \label{cl:strictly-claim}
The following statements hold, provided that $\epsilon$ is sufficiently small: (i) $\Gamma'$ is convex; (ii) every vertex that is flat in an incident face in $\Gamma'$ is flat in the same face in $\Gamma$; and (iii) every internal vertex of $Q^*$ is convex in every incident face in $\Gamma'$.
\end{claimx} 

\begin{proof}
For any $\epsilon > 0$, the internal vertices of $Q^*$ are convex in the unique face $g$ of $G_l(Q^*)$ they are all incident to; also, these vertices remain convex in all the internal faces of $G_r(Q^*)$ they are incident to, provided that $\epsilon$ is sufficiently small. Further, since $x$ and $y$ are convex in $g$ in $\Gamma$, they remain convex in $g$ in $\Gamma'$, provided that $\epsilon$ is sufficiently small. Also, for each internal face $h$ of $G_r(Q^*)$ incident to $x$ (to $y$), the angle at $x$ (at $y$) in $h$ in $\Gamma'$ is smaller than or equal to the angle at $x$ (at $y$) in $h$ in $\Gamma$, and hence it is either convex or flat in $\Gamma'$ (if it was flat in $\Gamma$). Finally, consider each edge $(w,u_i) \in G_r(Q^*)$ incident to an internal vertex $u_i$ of $Q^*$. Let $h_l$ and $h_r$ be the two faces incident to $(w,u_i)$ to the left and to the right of $(w,u_i)$, respectively, when traversing $(w,u_i)$ from the vertex with the lowest level to the vertex with the highest level (see Figs.~\ref{fig:strictly-convex}(b)-(c)). The angle at $w$ in $h_r$ is smaller in $\Gamma'$ than in $\Gamma$, hence it is convex in $\Gamma'$ since it was convex of flat in $\Gamma$. Further, the angle at $w$ in $h_l$ is larger in $\Gamma'$ than in $\Gamma$; however, since $G_r(Q^*)$ does not contain any internal left-flat path, it follows that $w$ is not a left-flat vertex in $\Gamma$, hence the angle at $w$ in $h_l$ is convex in $\Gamma$ and it remains convex in $\Gamma'$, provided that $\epsilon$ is sufficiently small. Further, since the positions of all vertices not in $Q^*\setminus \{x,y\}$ are the same in $\Gamma'$ and in $\Gamma$, all the other angles are the same in $\Gamma$ and in $\Gamma'$. 
\end{proof}

Claim~\ref{cl:strictly-claim} implies that $\Gamma'$ is convex and that $a(\Gamma')< a(\Gamma)$. The theorem follows.
\end{proof}


\section{A Morphing Algorithm} \label{se:algorithm}

In this section we give algorithms to morph convex drawings of plane graphs. We start with a lemma about unidirectional linear morphs. Two level planar drawings $\Gamma_1$ and $\Gamma_2$ of a hierarchical plane graph $(G,\vec d,L,\gamma)$ are {\em left-to-right equivalent} if, for any line $l_i\in L$, for any vertex or edge $x$ of $G$, and for any vertex or edge $y$ of $G$, we have that $x$ precedes (follows) $y$ on $l_i$ in $\Gamma_1$ if and only if $x$ precedes (resp.\ follows) $y$ on $l_i$ in $\Gamma_2$. We have the following.

\begin{lemma} \label{le:morphing-equivalent}
	The linear morph \morph{\Gamma_1,\Gamma_2} between two left-to-right equivalent strictly-convex level planar drawings $\Gamma_1$ and $\Gamma_2$ of a hierarchical-st strictly-convex graph $(G,\vec d,L,\gamma)$ is strictly-convex and unidirectional.
\end{lemma}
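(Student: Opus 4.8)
The plan is to analyze the linear morph $\morph{\Gamma_1,\Gamma_2}$ coordinate by coordinate and reduce the claim to the monotonicity properties already established. First I would observe that the morph is unidirectional: since $\Gamma_1$ and $\Gamma_2$ are level drawings of the same hierarchical graph $(G,\vec d,L,\gamma)$, every vertex $v$ lies on the line $\gamma(v)$ in both drawings, and that line is orthogonal to $\vec d$. Hence the straight-line trajectory of $v$ during the morph is contained in $\gamma(v)$, and all these trajectories are parallel (all orthogonal to $\vec d$). So every intermediate drawing $\Gamma_\tau$, for $\tau\in[0,1]$, is again a level drawing of $(G,\vec d,L,\gamma)$: each vertex stays on its line and, since $\Gamma_1$ and $\Gamma_2$ are left-to-right equivalent, the relative left-to-right order of the two endpoints of any edge on the lines they occupy is preserved, so each edge remains $\vec d$-monotone. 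Thus it only remains to show that $\Gamma_\tau$ is a strictly-convex \emph{planar} drawing for every $\tau$.

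The key step is to prove that at every time $\tau$ and for every face $h$ of $G$, the polygon $Q_h^\tau$ delimiting $h$ in $\Gamma_\tau$ is strictly convex (and correctly oriented: convex at every vertex if $h$ is internal, concave at every vertex if $h$ is the outer face); planarity of $\Gamma_\tau$ then follows, since a drawing of a plane graph in which every face boundary is a simple convex polygon oriented consistently with a given embedding is planar. Because $G$ is a hierarchical-st plane graph, each face $h$ is an st-face: its boundary consists of two $\vec d$-monotone paths from a source $s_h$ to a sink $t_h$, one to the left and one to the right. I would fix an internal face $h$ and a vertex $w$ on, say, the left path of $h$, with level-predecessor $w^-$ and level-successor $w^+$ on that path (if $w$ is $s_h$ or $t_h$ the two neighbors are one on each path, but the argument is the same). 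I want to show the turn at $w$ inside $h$ is a strict left/right turn of the correct sign throughout the morph. The crucial point is that convexity of an angle at $w$ in $h$, for a configuration in which $w^-,w,w^+$ have fixed heights (levels) $y^-<y_w<y^+$ and only their horizontal coordinates $x^-(\tau),x_w(\tau),x^+(\tau)$ vary affinely in $\tau$, is governed by the sign of the determinant
\[
\delta_h^w(\tau)=\det\begin{pmatrix} x_w(\tau)-x^-(\tau) & y_w-y^-\\ x^+(\tau)-x_w(\tau) & y^+-y_w\end{pmatrix},
\]
and each horizontal coordinate is an affine function of $\tau$, so $\delta_h^w$ is a \emph{quadratic} polynomial in $\tau$ — which is not obviously sign-preserving on $[0,1]$. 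This is where the left-to-right equivalence must be used more carefully: I would rewrite $\delta_h^w(\tau)=(y^+-y_w)(x_w(\tau)-x^-(\tau))-(y_w-y^-)(x^+(\tau)-x_w(\tau))$, and note that $x_w(\tau)-x^-(\tau)$ and $x^+(\tau)-x_w(\tau)$ are each affine in $\tau$ and, by left-to-right equivalence applied on the lines $\gamma(w^-),\gamma(w),\gamma(w^+)$ together with $\vec d$-monotonicity, do not change sign between $\tau=0$ and $\tau=1$; since the coefficients $y^+-y_w>0$ and $y_w-y^->0$ are positive constants, $\delta_h^w(\tau)$ is in fact an \emph{affine} function of $\tau$ (the quadratic terms cancel because $x_w$ appears in both summands with coefficients that sum to a constant) and hence, being of one sign at the endpoints (both $\Gamma_1$ and $\Gamma_2$ being strictly convex, with the same sign because they are left-to-right equivalent and same-embedding), it keeps that sign on all of $[0,1]$. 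Running this over all faces and all vertices gives strict convexity of every face boundary, at every consistent orientation, at every time $\tau$.

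Finally I would assemble these pieces: unidirectionality from the parallel trajectories contained in the level lines; each $\Gamma_\tau$ a level drawing from the preservation of monotonicity of edges; each $\Gamma_\tau$ strictly convex and planar from the sign-preservation of the determinants $\delta_h^w$ over all faces $h$ and all incident vertices $w$; hence $\morph{\Gamma_1,\Gamma_2}$ is a strictly-convex unidirectional linear morph. The main obstacle I anticipate is exactly the algebraic point above — arguing that each face-angle determinant is affine (not merely quadratic) in the morph parameter, which hinges on using left-to-right equivalence to control the signs of the horizontal offsets $x_w-x^-$ and $x^+-x_w$ uniformly over $[0,1]$; once that is in hand, endpoint strict convexity of $\Gamma_1,\Gamma_2$ propagates to every intermediate instant. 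A secondary technical point is handling the source and sink vertices of each st-face and the outer face (where "convex" becomes "concave"), but these are symmetric variants of the same determinant computation.
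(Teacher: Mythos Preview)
Your approach is essentially the paper's: the heart of both arguments is that the signed turn at each face corner is an \emph{affine} function of the morph parameter $\tau$ with the same strict sign at $\tau=0$ and $\tau=1$, hence throughout. The paper phrases this geometrically---the intersection of the line through $u,z$ with the level line $\gamma(v)$ moves at constant speed, so its left-to-right order relative to $v$ on that line is preserved (citing Barrera-Cruz et al.)---while you phrase it via the $2\times 2$ determinant; these are the same computation.

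Two points in your write-up need cleaning up. First, there is no quadratic to worry about: since every vertex stays on its level line, the $y$-entries in your determinant are \emph{constants}, so $\delta_h^w(\tau)$ is from the outset a fixed linear combination of the affine functions $x^-(\tau),x_w(\tau),x^+(\tau)$. Nothing cancels; your ``cancellation'' explanation and the detour about $x_w-x^-$ keeping sign are both unnecessary, and the latter is not even something left-to-right equivalence gives you (it compares points on the \emph{same} line, whereas $w$ and $w^-$ sit on different lines). Second, your route to planarity---``all face boundaries are correctly oriented strictly-convex polygons, hence the drawing is planar''---is true but is itself a nontrivial lemma that you have not proved. The paper avoids this by citing planarity and unidirectionality of morphs between left-to-right equivalent level drawings from~\cite{addfpr-mpgdo-14}, and then uses that planarity to handle the source/sink angles of each st-face before invoking the affine argument for the remaining angles.
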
 

\begin{proof}
	It has been proved in~\cite{addfpr-mpgdo-14} that \morph{\Gamma_1,\Gamma_2} is planar and unidirectional. Hence, it suffices to prove that the morph preserves the strict convexity of the drawing at any time instant. 
	In fact, it suffices to prove the following statement. Let $(u,v)$ and $(u,z)$ be any two edges of $G$ such that the clockwise rotation around $u$ that brings $(u,v)$ to coincide with $(u,z)$ is smaller than $\pi$ radians both in $\Gamma_1$ and in $\Gamma_2$; then, at every time instant of the morph, the clockwise rotation around $u$ that brings $(u,v)$ to coincide with $(u,z)$ is smaller than $\pi$ radians.
	
	\begin{figure}[t]
		\begin{center}
			\mbox{\includegraphics[scale=0.35]{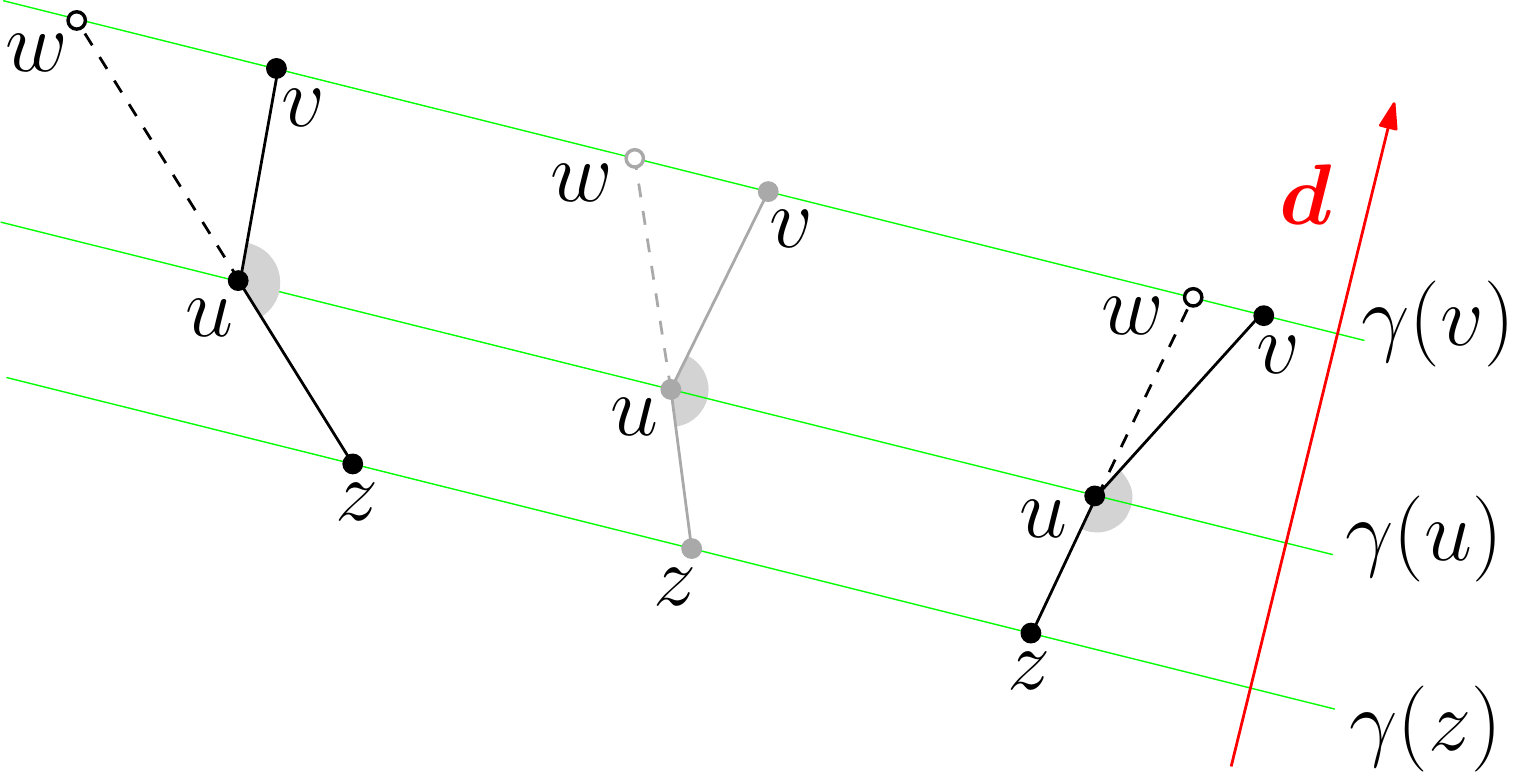}}
			\caption{Illustration for the proof of Lemma~\ref{le:morphing-equivalent}.}
			\label{fig:unidirectional}
		\end{center}
	\end{figure}
	
	The statement descends from the planarity of \morph{\Gamma_1,\Gamma_2} if $\gamma(u)<\gamma(v),\gamma(z)$ or $\gamma(v),\gamma(z)<\gamma(u)$. Otherwise, assume w.l.o.g. that $\gamma(z)<\gamma(u)<\gamma(v)$, as in Fig.~\ref{fig:unidirectional}. Let $w$ denote the intersection point of the line through $u$ and $z$ with $\gamma(v)$. By assumption $w$ precedes $v$ on $\gamma(v)$ both in $\Gamma_1$ and in $\Gamma_2$. Further, $w$ moves at constant speed along $\gamma(v)$ during \morph{\Gamma_1,\Gamma_2}. Then, as proved by Barrera-Cruz {\em et al.}~\cite{bhl-mpgdum-13}, $w$ precedes $v$ on $\gamma(v)$ during the entire morph \morph{\Gamma_1,\Gamma_2}. The statement and hence the lemma follow.
\end{proof}

We now describe an algorithm to construct a strictly-convex morph between any two strictly-convex drawings $\Gamma_s$ and $\Gamma_t$ of a plane graph $G$ with $n$ vertices and $m$ internal faces. The algorithm works by induction on $m$ and consists of at most $2n+2m$ morphing steps. 

{\bf In the base case} we have $m=1$, hence $G$ is a cycle. We have the following.

\begin{claimx}\label{cl:base-case-cycle}
There exists a strictly-convex unidirectional morph with at most $2n+2$ steps between any two strictly-convex drawings $\Gamma_s$ and $\Gamma_t$ of cycle $G$.
\end{claimx}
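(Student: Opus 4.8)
The plan is to reduce the problem to the level-planar machinery already developed. First I would choose a generic direction $\vec{d}$: since $\Gamma_s$ and $\Gamma_t$ are strictly-convex drawings of the $n$-cycle $G$, by Lemma~\ref{le:convex-is-monotone} there is an oriented line $\vec{d}$ (not orthogonal to any line through two vertices of either polygon) with respect to which both $\Gamma_s$ and $\Gamma_t$ are $\vec{d}$-monotone; moreover we may pick $\vec{d}$ so that all $2n$ vertices have pairwise distinct projections onto $\vec{d}$ in $\Gamma_s$ and also (a possibly different generic choice) in $\Gamma_t$. The idea is to morph $\Gamma_s$ to a ``canonical'' strictly-convex drawing $\Gamma'_s$ in which the vertices sit on $n$ prescribed equally-spaced levels orthogonal to $\vec{d}$, in the $\vec{d}$-order they have in $\Gamma_s$; do the same for $\Gamma_t\to\Gamma'_t$ with respect to its own direction; and finally morph $\Gamma'_s$ to $\Gamma'_t$. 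The source and sink $s,t$ of the two monotone paths forming the cycle are the extreme vertices in the $\vec{d}$-order.

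For the step $\Gamma_s\to\Gamma'_s$: think of $(G,\vec{d},L,\gamma)$ as a hierarchical-st graph (one face is the outer face, the other the single internal face, both st-faces since the cycle is $\vec{d}$-monotone), and note $G$ is a strictly-convex graph by Theorem~\ref{th:strictly-convex-characterization} (a cycle is a subdivision of $K_3$, internally triconnected, with no internal degree-$2$ vertex). Here I expect to need a small strengthening of the left-to-right-equivalence argument of Lemma~\ref{le:morphing-equivalent}: that lemma handles morphs between two level drawings on the \emph{same} level assignment $\gamma$, whereas moving from the arbitrary $\vec{d}$-monotone drawing $\Gamma_s$ to one on equally spaced levels changes the $\gamma$-values. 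The clean way around this is to first rescale along $\vec{d}$: one unidirectional morphing step moves each vertex parallel to $\vec{d}$ so that consecutive $\vec{d}$-levels become equally spaced; because each vertex's coordinate on a line orthogonal to $\vec{d}$ is held fixed and each strictly-convex monotone path, under a monotone reparametrisation of the common direction, stays strictly-convex (the slopes change monotonically), this single step is strictly-convex and unidirectional. After this, $\Gamma_s$ already has the target level assignment $\gamma$; a second unidirectional step (orthogonal to $\vec{d}$, applying Lemma~\ref{le:morphing-equivalent}) moves each vertex within its level to the canonical position, since the start and target drawings are left-to-right equivalent (same $\vec{d}$-order, same cyclic order, same outer face). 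That is $2$ steps for $\Gamma_s\to\Gamma'_s$ and $2$ more for $\Gamma'_t\to\Gamma_t$.

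It remains to morph $\Gamma'_s$ to $\Gamma'_t$. These two canonical drawings need not be left-to-right equivalent, because $\Gamma_s$ and $\Gamma_t$ can induce different $\vec{d}$-orders of the vertices and, even worse, different partitions of the cycle into the two monotone ``sides''. I would handle this by repeatedly performing an adjacent transposition: if two vertices $a,b$ occupy consecutive levels and we wish to swap their level-order, perform one unidirectional step parallel to $\vec{d}$ that slides $a$ past $b$'s level (and beyond), keeping all other vertices' $\vec{d}$-coordinates in place; one checks this step stays strictly-convex by keeping the amount of sliding small relative to the gaps and using that $a$ has degree $2$, so only the two edges at $a$ and the faces around them are affected, and their angles vary continuously and stay below $\pi$. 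Each such step reduces by one the number of inversions between the $\vec{d}$-order of $\Gamma'_s$ and that of $\Gamma'_t$; when a vertex crosses the current source or sink it changes sides consistently. Since the two $\vec{d}$-orders are permutations of the same $n$-element set, at most $\binom{n}{2}$ transpositions suffice in principle, which is too many; so the real plan is the smarter one used in~\cite{addfpr-mpgdo-14}: sort by moving each vertex, in the final target order, one at a time to its correct extreme position, so that after $n-2$ interior moves the two drawings are left-to-right equivalent and a single application of Lemma~\ref{le:morphing-equivalent} finishes the job — giving at most $2n+2$ steps overall (the $2{+}2$ canonicalisation steps, plus up to $n-2$ reordering steps, plus the final step, with the constants arranged to land on $2n{+}2$).

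The main obstacle is the middle phase: controlling the reordering of vertices along $\vec{d}$ with a \emph{linear} number of strictly-convex unidirectional steps, while guaranteeing each intermediate drawing is genuinely strictly-convex (no flat or reflex angle, no crossing) at \emph{every} instant, and correctly tracking how the source/sink and the two monotone sides of the cycle change as vertices migrate across the extremes. I expect this to follow the template of~\cite{addfpr-mpgdo-14, bhl-mpgdum-13} for unidirectional morphs of monotone polygons, with the strict-convexity bookkeeping handled exactly as in Lemma~\ref{le:morphing-equivalent}: reduce to showing that for each pair of edges sharing the degree-$2$ vertex being moved, the relevant ``crossing point'' on a fixed level moves monotonically, hence never reverses, during each step.
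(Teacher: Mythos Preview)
Your proposal has a genuine gap in the very first ``rescaling'' step. You claim that moving every vertex parallel to $\vec d$ so that the $\vec d$-coordinates become equally spaced is a strictly-convex morph, because ``each strictly-convex monotone path, under a monotone reparametrisation of the common direction, stays strictly-convex.'' This is false. Monotone (non-affine) reparametrisation of the $\vec d$-axis does not preserve strict convexity: take the right monotone side of a strictly-convex pentagon with vertices $s=(0,0)$, $a=(1,0.01)$, $c=(2,10)$, $t=(0,20)$ (completed on the left by $b=(-1,15)$); the inverse slopes along $s,a,c,t$ are $100,\ 0.1,\ -0.2$, strictly decreasing. The $\vec d$-heights $0,\ 0.01,\ 10,\ 15,\ 20$ remapped to $1,2,3,4,5$ send $s,a,c$ to $(0,1),(1,2),(2,3)$, which are collinear. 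Thus the target of your rescaling step is not even strictly convex, so the step is not a strictly-convex morph. Lemma~\ref{le:morphing-equivalent} cannot rescue you here, since it applies to morphs \emph{within} a fixed level assignment (motion orthogonal to $\vec d$), not to changing the levels themselves.

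The middle ``reordering'' phase is also not under control. Sliding a single vertex parallel to $\vec d$ past a neighbouring level need not keep the polygon strictly convex (the two edges at that vertex can become collinear at the instant it crosses its neighbour's level, or the angle at an adjacent vertex can open to $\pi$), and your appeal to~\cite{addfpr-mpgdo-14} is not a proof: that paper's one-vertex moves are set up inside a different framework and do not immediately give a strictly-convex guarantee for a bare cycle. Your step count also does not add up to $2n{+}2$ as written. By contrast, the paper's argument avoids both difficulties by first pushing all vertices orthogonally to a fixed generic direction onto a large circle $\mathcal C$ (one strictly-convex unidirectional step, justified by Lemma~\ref{le:morphing-equivalent} on the original level set), and then moving vertices one at a time along chords of $\mathcal C$ to a common configuration $q_1,\dots,q_n\in\mathcal C$; each such step is between two strictly-convex inscribed polygons that are left-to-right equivalent with respect to the direction orthogonal to that chord, so Lemma~\ref{le:morphing-equivalent} applies directly. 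Doing this from $\Gamma_s$ and from $\Gamma_t$ to the same target yields $1+n+n+1=2n{+}2$ steps.
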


\begin{proof}
Let $v_1,\dots,v_n$ be the vertices of $G$ in the order they appear when clockwise traversing $G$. Let $\ell$ be a straight line not orthogonal to any line through two vertices of $G$ in $\Gamma_s$ and in $\Gamma_t$. Draw a circumference $\cal C$ enclosing both $\Gamma_s$ and $\Gamma_t$. We morph $\Gamma_s$ ($\Gamma_t$) into a drawing $\Gamma'_s$ ($\Gamma_t'$) such that all the vertices of $G$ are on $\cal C$ (see Fig.~\ref{fig:morphing-base}(a) and Fig.~\ref{fig:morphing-base}(d)) with a single unidirectional strictly-convex morphing step, as follows. Each vertex $v_i$ moves on the line $d_i$ through it orthogonal to $\ell$. Further, each vertex moves in the direction that does not make it collide with the initial drawing of $G$. That is, one of the two half-lines starting at $v_i$ and lying on $d_i$ has no intersection with the only internal face of $G$; then move $v_i$ in the direction defined by that half-line until it hits $\cal C$. This morph is unidirectional (every vertex moves along a line orthogonal to $\ell$) and strictly-convex; the latter follows easily from Lemma~\ref{le:morphing-equivalent}, given that $\Gamma_s$ and $\Gamma'_s$ ($\Gamma_t$ and $\Gamma_t'$) are left-to-right equivalent drawings of the  hierarchical-st strictly-convex graph $(G,\ell,L,\gamma)$, where $L$ consists of the lines $d_i$ and $\gamma$ maps $v_i$ to $d_i$, for $1\leq i\leq n$. 

\begin{figure}[t]
	\begin{center}
		\begin{tabular}{c c c c}
			\mbox{\includegraphics[scale=0.55]{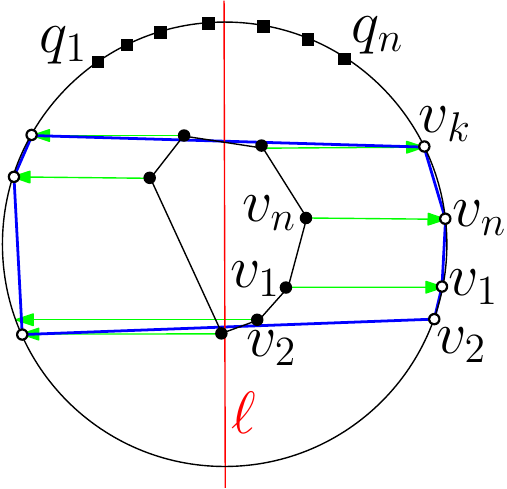}} \hspace{2mm} &
			\mbox{\includegraphics[scale=0.55]{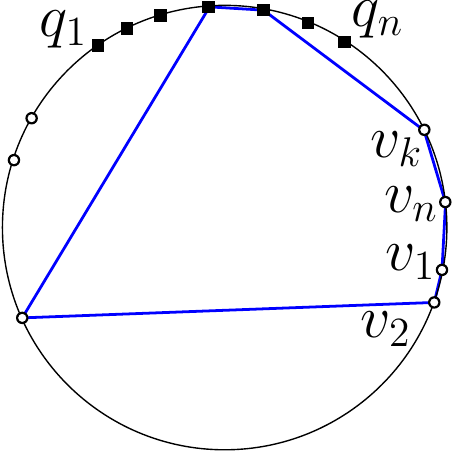}} \hspace{2mm} &
			\mbox{\includegraphics[scale=0.55]{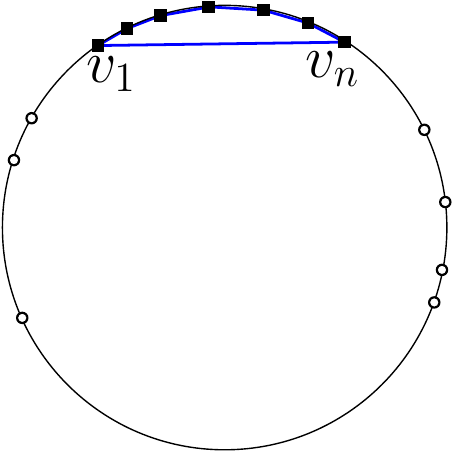}} \hspace{2mm} &
			\mbox{\includegraphics[scale=0.55]{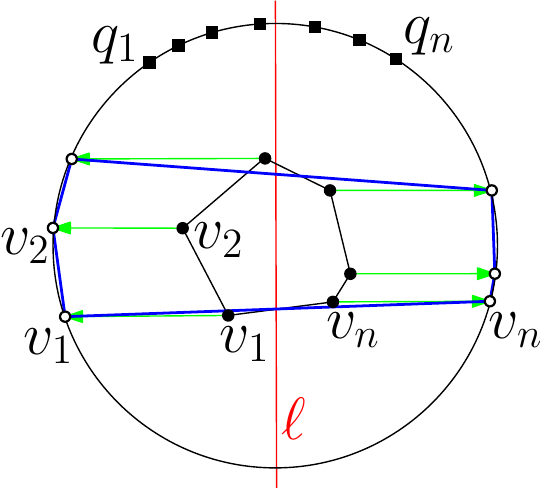}} \\
			(a) \hspace{2mm} & (b) \hspace{2mm} & (c) \hspace{2mm} & (d)
		\end{tabular}
		\caption{(a) Drawings $\Gamma_s$ (black circles and black lines) and $\Gamma'_s$ (white circles and blue lines), together with points $q_1,\dots,q_n$. (b) Morph \morph{\Gamma'_s,\dots,\Gamma} after two steps. (c) Drawing $\Gamma$. (d) Drawings $\Gamma_t$ (black circles and black lines) and $\Gamma'_t$ (white circles and blue lines), together with points $q_1,\dots,q_n$.}
		\label{fig:morphing-base}
	\end{center}
\end{figure}

Consider $n$ points $q_1,\dots,q_n$ in this clockwise order on $\cal C$ both in $\Gamma'_s$ and in $\Gamma'_t$ (see Figs.~\ref{fig:morphing-base}(a) and~\ref{fig:morphing-base}(d)). Points $q_1,\dots,q_n$ are so close that the arc of $\cal C$ between $q_1$ and $q_n$ containing $q_2$ does not contain any vertex of $G$ in $\Gamma'_s$ and $\Gamma'_t$. We morph $\Gamma'_s$ into a drawing $\Gamma$ of $G$ in which $v_i$ is placed at $q_i$, for $1\leq i\leq n$, as follows (see Figs.~\ref{fig:morphing-base}(a)--(c)). Let $v_k$ be the first vertex of $G$ encountered when clockwise traversing $\cal C$ from $q_n$. For $j=k-1,\dots,1,k,\dots,n$, move $v_j$ to $p_j$. Thus morph \morph{\Gamma'_s,\dots,\Gamma} consists of $n$ morphing steps; each morphing step is unidirectional, as a single vertex moves during it, and strictly-convex by Lemma~\ref{le:morphing-equivalent}, since it is a morph between two left-to-right equivalent drawings of the same hierarchical-st plane graph $(G,\vec{d_j},L_j,\gamma_j)$, where $\vec{d_j}$ is an oriented line orthogonal to the line along which $v_j$ moves, $L_j$ consists of the lines orthogonal to $\vec{d_j}$ through $v_1,\dots,v_n$, and $\gamma_j$ maps $v_i$ to the line in $L_j$ through it, for $1\leq i\leq n$. A unidirectional strictly-convex morph \morph{\Gamma'_t,\dots,\Gamma} with $n$ morphing steps is computed analogously. Hence, \morph{\Gamma_s,\Gamma'_s,\dots,\Gamma,\dots,\Gamma'_t,\Gamma_t} is a unidirectional strictly-convex morph between $\Gamma_s$ and $\Gamma_t$ with $2n+2=2n+2m$ morphing steps.
\end{proof}




{\bf In the inductive case} we have $m>1$. Then we apply Lemma~\ref{le:bg-construction-internally} to $G$ in order to obtain a graph $G'$ with $m'<m$ internal faces. We proceed as follows.


Assume first that, according to Lemma~\ref{le:bg-construction-internally}, a degree-$3$ internal vertex $u$ of $G$ as well as the edges and the internal vertices of paths $P_1$, $P_2$, and $P_3$ can be removed from $G$ resulting in a convex graph $G'$, where: (i) $P_1$, $P_2$, and $P_3$ respectively connect $u$ with vertices $u_1$, $u_2$, and $u_3$ of the cycle $C$ delimiting the outer face $f$ of $G$; (ii) $P_1$, $P_2$, and $P_3$ are vertex-disjoint except at $u$; and (iii) the internal vertices of $P_1$, $P_2$, and $P_3$ are degree-$2$ internal vertices of $G$. Graph $G$ has no degree-$2$ internal vertices, since it is strictly-convex (see Theorem~\ref{th:strictly-convex-characterization}), hence $P_1$, $P_2$, and $P_3$ are edges $(u,u_1)$, $(u,u_2)$, and $(u,u_3)$, respectively. 


Vertex $u$ lies in the interior of triangle $\Delta(u_1,u_2,u_3)$ both in $\Gamma_s$ and in $\Gamma_t$, since $\deg(G,u)=3$ and the angles incident to $u$ are smaller than $\pi$ both in $\Gamma_s$ and in $\Gamma_t$. Hence, the position of $u$ is a convex combination of the positions of $u_1$, $u_2$, and $u_3$ both in $\Gamma_s$ and in $\Gamma_t$ (the coefficients of such convex combinations might be different in $\Gamma_s$ and in $\Gamma_t$). Further, no vertex other than $u$ and no edge other than those incident to $u$ lie in the interior of triangle $\Delta(u_1,u_2,u_3)$ in $\Gamma_s$ and $\Gamma_t$, since these drawings are strictly-convex. With a single unidirectional linear morph, move $u$ in $\Gamma_s$ to the point that is a convex combination of the positions of $u_1$, $u_2$, and $u_3$ with the same coefficients as in $\Gamma_t$. This morph is strictly-convex since $u$ stays inside $\Delta(u_1,u_2,u_3)$ at any time instant. Let $\Gamma'_s$ be the resulting drawing of $G$.

Let $Q_1$, $Q_2$, and $Q_3$ be the polygons delimiting the faces of $G$ incident to $u$ in $\Gamma_s$. Let $\Lambda'_s$ be the drawing of $G'$ obtained from $\Gamma'_s$ by removing $u$ and its incident edges. We claim that $\Lambda'_s$ is strictly-convex. Indeed, every internal face of $G'$ different from the face $f_u$ that used to contain $u$ is also a face in $\Gamma'_s$, hence it is delimited by a strictly-convex polygon. Further, every internal angle of the polygon delimiting $f_u$ is either an internal angle of $Q_1$, $Q_2$, or $Q_3$, hence it is smaller than $\pi$, since $\Gamma'_s$ is strictly-convex, or is incident to $u_1$, $u_2$, or $u_3$; however, these vertices are concave in $f$, hence they are convex in $f_u\neq f$. Analogously, the drawing $\Lambda'_t$ of $G'$ obtained from $\Gamma_t$ by removing $u$ and its incident edges is strictly-convex. 

Inductively construct a unidirectional convex morph \morph{\Lambda'_s=\Lambda_0,\dots,\Lambda_\ell=\Lambda'_t} with $\ell\leq 2(n-1)+2(m-2)$ morphing steps. For each $1\leq j\leq \ell-1$, draw $u$ in $\Lambda_j$ at a point that is the convex combination of the positions of $u_1$, $u_2$, and $u_3$ with the same coefficients as in $\Gamma'_s$ and in $\Gamma_t$; denote by $\Gamma_j$ the resulting drawing of $G$. Morph \morph{\Gamma'_s=\Gamma_0,\dots,\Gamma_\ell=\Gamma_t} is strictly-convex and unidirectional. Namely, in every morphing step \morph{\Gamma_j,\Gamma_{j+1}}, vertex $u$ moves between two points that are convex combinations of the positions of $u_1$, $u_2$, and $u_3$ with the same coefficients, hence it moves parallel to each of $u_1$, $u_2$, and $u_3$ (from which \morph{\Gamma_0,\dots,\Gamma_\ell} is unidirectional) and it stays inside $\Delta(u_1,u_2,u_3)$ at any time instant of \morph{\Gamma_j,\Gamma_{j+1}} (from which \morph{\Gamma_0,\dots,\Gamma_\ell} is strictly-convex). Thus, \morph{\Gamma_s,\Gamma'_s=\Gamma_0,\dots,\Gamma_\ell=\Gamma_t} is a unidirectional strictly-convex morph between $\Gamma_s$ and $\Gamma_t$ with $\ell+1\leq 2n+2m-5$ morphing steps.


Assume next that, according to Lemma~\ref{le:bg-construction-internally}, the edges and the internal vertices of a path $P$, whose internal vertices are degree-$2$ internal vertices of $G$, can be deleted from $G$ so that the resulting graph $G'$ is convex. Graph $G$ has no degree-$2$ internal vertices, since it is strictly-convex (see Theorem~\ref{th:strictly-convex-characterization}), hence $P$ is an edge $(u,v)$. Removing $(u,v)$ from $\Gamma_s$ (from $\Gamma_t$) results in a drawing $\Lambda_s$ (resp.\ $\Lambda_t$) of $G'$ which is not, in general, convex, since vertices $u$ and $v$ might be concave in the face $f_{uv}$ of $G'$ that used to contain $(u,v)$, as in Fig.~\ref{fig:case2}. By Lemma~\ref{le:two-polygons-monotone}, there exists an oriented straight line $\vec{d_s}$ such that the polygon $Q_{uv}$ representing the cycle $C_{uv}$ delimiting $f_{uv}$ is $\vec{d_s}$-monotone. By slightly perturbing the slope of $\vec{d_s}$, we can assume that it is not orthogonal to any line through two vertices of $G'$. Let $L'_s$ be the set of parallel and distinct lines through vertices of $G'$ and orthogonal to $\vec{d_s}$. Let $\gamma'_s$ be the function that maps each vertex of $G'$ to the line in $L'_s$ through it. We have the following. 

\begin{lemma} \label{le:orientation}
	$(G',\vec{d_s},L'_s,\gamma'_s)$ is a hierarchical-st convex graph.
\end{lemma}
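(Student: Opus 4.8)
The plan is to verify the three defining properties of a hierarchical-st convex graph for the tuple $(G',\vec{d_s},L'_s,\gamma'_s)$: namely that $G'$ is a convex graph (already established, since Lemma~\ref{le:bg-construction-internally} guarantees $G'$ is convex), that $(G',\vec{d_s},L'_s,\gamma'_s)$ is a valid hierarchical plane graph, and crucially that every face of $G'$ is an st-face with respect to $\vec{d_s}$. First I would observe that, by construction, $\vec{d_s}$ is not orthogonal to any line through two vertices of $G'$, so $\gamma'_s$ maps adjacent vertices to distinct lines of $L'_s$; thus the tuple is a well-defined hierarchical graph, and the straight-line drawing $\Lambda_s$ (which is planar, being $\Gamma_s$ minus an edge) realizes the given plane embedding as a level drawing, so $(G',\vec{d_s},L'_s,\gamma'_s)$ is a hierarchical plane graph.

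The heart of the argument is the st-face property. Recall from the excerpt that a face $f$ of a hierarchical plane graph is an st-face if and only if the polygon delimiting $f$ in a straight-line level planar drawing is $\vec{d_s}$-monotone. So I would argue face by face on $\Lambda_s$. Every internal face of $G'$ other than $f_{uv}$ coincides with an internal face of $\Gamma_s$; since $\Gamma_s$ is strictly-convex, each such face is delimited by a strictly-convex polygon, and by Lemma~\ref{le:convex-is-monotone} such a polygon is $\vec{d}$-monotone for any oriented line $\vec{d}$ not orthogonal to a line through two of its vertices — in particular for $\vec{d_s}$, by our genericity choice. The outer face of $G'$ is likewise delimited in $\Lambda_s$ by the strictly-convex polygon representing $C$ (the removed edge $(u,v)$ is internal, so $C$ is unaffected), which is again $\vec{d_s}$-monotone by Lemma~\ref{le:convex-is-monotone}. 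Finally, the face $f_{uv}$ is delimited by the polygon $Q_{uv}$, which is exactly the polygon obtained by gluing the two strictly-convex polygons $Q_u$ and $Q_v$ (the faces of $G$ incident to $(u,v)$ in $\Gamma_s$) along the edge $(u,v)$ and deleting that edge; by Lemma~\ref{le:two-polygons-monotone}, $Q_{uv}$ is monotone with respect to a line obtained by slightly rotating a line orthogonal to the line through $(u,v)$ and arbitrarily orienting it — which is precisely how $\vec{d_s}$ was chosen. Hence $Q_{uv}$ is $\vec{d_s}$-monotone as well.

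Having shown every face of $G'$ is $\vec{d_s}$-monotone in the straight-line level planar drawing $\Lambda_s$, I conclude that every face is an st-face, so $(G',\vec{d_s},L'_s,\gamma'_s)$ is a hierarchical-st plane graph; combined with $G'$ being convex, this gives exactly a hierarchical-st convex graph, as claimed.

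I expect the main obstacle to be the careful bookkeeping around the face $f_{uv}$: one must check that $Q_{uv}$ is genuinely the union of two strictly-convex polygons sharing the edge $(u,v)$ and lying on opposite sides of the line through it, so that Lemma~\ref{le:two-polygons-monotone} applies, and that the perturbation of $\vec{d_s}$ needed for genericity (not orthogonal to any line through two vertices of $G'$) is compatible with the small rotation that Lemma~\ref{le:two-polygons-monotone} permits — both are "slight" perturbations of a line orthogonal to $(u,v)$, so there is room to satisfy both constraints simultaneously, but this needs to be stated explicitly. A secondary point to handle is the degenerate possibility that $f_{uv}$ coincides with the outer face or that $u$ or $v$ has degree two in $G'$; the strict convexity of $\Gamma_s$ and $\Gamma_t$ (hence minimum internal degree three, by Theorem~\ref{th:strictly-convex-characterization}) together with $(u,v)$ being an internal edge rules these out, but it is worth a sentence.
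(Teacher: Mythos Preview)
Your proposal is correct and follows essentially the same route as the paper: verify that $\Lambda_s$ witnesses $(G',\vec{d_s},L'_s,\gamma'_s)$ as a hierarchical plane graph, then show every face is an st-face by checking $\vec{d_s}$-monotonicity of each bounding polygon via Lemma~\ref{le:convex-is-monotone} for all faces except $f_{uv}$ and via the construction of $\vec{d_s}$ (i.e., Lemma~\ref{le:two-polygons-monotone}) for $f_{uv}$. One small remark: your worry that $u$ or $v$ might have degree two in $G'$ is not actually ruled out (indeed this is precisely what happens in Cases~2 and~3 of the algorithm), but it is also irrelevant to the argument, so that paragraph of anticipated obstacles can be dropped.
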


\begin{proof}
	By construction, $\Lambda_s$ is a straight-line level planar drawing of $(G',\vec{d_s},L'_s,\gamma'_s)$, hence $(G',\vec{d_s},L'_s,\gamma'_s)$ is a hierarchical plane graph. Further, $G'$ is a convex graph by assumption. Moreover, every polygon delimiting a face of $G'$ in $\Lambda_s$ is $\vec{d_s}$-monotone. This is true for $Q_{uv}$ by construction and for every other polygon $Q$ delimiting a face of $G'$ in $\Lambda_s$ by Lemma~\ref{le:convex-is-monotone}, given that $Q$ is a strictly-convex polygon. Since every polygon delimiting a face of $G'$ in $\Lambda_s$ is $\vec{d_s}$-monotone, every face of $G'$ is an st-face, and the lemma follows. 
\end{proof}

Analogously, there exists an oriented straight line $\vec{d_t}$ that leads to define a hierarchical-st convex graph $(G',\vec{d_t},L'_t,\gamma'_t)$ for which $\Lambda_t$ is a straight-line level planar drawing. 

We now distinguish three cases, based on whether $\deg(G',u),\deg(G',v)>2$ {\bf (Case 1)}, $\deg(G',u)=2$ and $\deg(G',v)>2$ {\bf (Case 2)}, or $\deg(G',u)=\deg(G',v)=2$ {\bf (Case 3)}. The case in which $\deg(G',u)>2$ and $\deg(G',v)=2$ is symmetric to Case 2.


{\bf In Case~1} graph $G'$ is strictly-convex, since it is convex and all its internal vertices have degree greater than two. By Theorem~\ref{th:hong-nagamochi-revised}, $(G',\vec{d_s},L'_s,\gamma'_s)$ and $(G',\vec{d_t},L'_t,\gamma'_t)$ admit strictly-convex level planar drawings $\Lambda'_s$ and $\Lambda'_t$, respectively. Let $\Gamma'_s$ ($\Gamma'_t$) be the strictly-convex level planar drawing of $(G,\vec{d_s},L'_s,\gamma'_s)$ (resp.\ of $(G,\vec{d_t},L'_t,\gamma'_t)$) obtained by inserting edge $(u,v)$ as a straight-line segment in $\Lambda'_s$ (resp.\ $\Lambda'_t$). Drawings $\Gamma_s$ and $\Gamma'_s$ ($\Gamma_t$ and $\Gamma'_t$) are left-to-right equivalent. This is argued as follows. First, since $G$ is a plane graph, its outer face is delimited by the same cycle $C$ in both $\Gamma_s$ and $\Gamma'_s$; further, the clockwise order of the vertices along $C$ is the same in $\Gamma_s$ and in $\Gamma'_s$ (recall that Theorem~\ref{th:hong-nagamochi-revised} allows us to arbitrarily prescribe the strictly-convex polygon representing $C$). Consider any two vertices or edges $x$ and $y$ both intersecting a line $\ell$ in $L'_s$; assume this line to be oriented in any way. Suppose, for a contradiction, that $x$ precedes $y$ on $\ell$ in $\Gamma_s$ and follows $y$ on $\ell$ in $\Gamma'_s$. Since $\Gamma_s$ and $\Gamma'_s$ are strictly-convex, there exists a $\vec{d_s}$-monotone path $P_x$ ($P_y$) containing $x$ (resp. $y$) and connecting two vertices of $C$. Then $P_x$ and $P_y$ properly cross, contradicting the planarity of $\Gamma_s$ or of $\Gamma'_s$, or they share a vertex which has a different clockwise order of its incident edges in the two drawings, contradicting the fact that $\Gamma_s$ and $\Gamma'_s$ are drawings of the same plane graph. By Lemma~\ref{le:morphing-equivalent}, linear morphs \morph{\Gamma_s,\Gamma'_s} and \morph{\Gamma_t,\Gamma'_t} are strictly-convex and unidirectional.


%
Inductively construct a unidirectional strictly-convex morph \morph{\Lambda'_s=\Lambda_0,\Lambda_1,\dots,\Lambda_\ell=\Lambda'_t} with $\ell\leq 2n+2(m-1)$ morphing steps between $\Lambda'_s$ and $\Lambda'_t$. For each $0\leq j\leq \ell$, draw edge $(u,v)$ in $\Lambda_j$ as a straight-line segment $\overline{uv}$; let $\Gamma_j$ be the resulting drawing of $G$. We have that morph \morph{\Gamma'_s=\Gamma_0,\Gamma_1,\dots,\Gamma_\ell=\Gamma'_t} is strictly-convex and unidirectional given that \morph{\Lambda_0,\Lambda_1,\dots,\Lambda_\ell} is strictly-convex and  unidirectional and given that, at any time instant of \morph{\Lambda_0,\Lambda_1,\dots,\Lambda_\ell}, segment $\overline{uv}$ splits the strictly-convex polygon delimiting $f_{uv}$ into two strictly-convex polygons. Thus, \morph{\Gamma_s,\Gamma'_s=\Gamma_0,\Gamma_1,\dots,\Gamma_\ell=\Gamma'_t,\Gamma_t} is a unidirectional strictly-convex morph between $\Gamma_s$ and $\Gamma_t$ with $\ell+2\leq 2n+2m$ morphing steps. 


{\bf In Case~2} let $G''$ be the graph obtained from $G'$ by replacing path $(x,u,y)$ with edge $(x,y)$, where $x$ and $y$ are the only neighbors of $u$ in $G'$. Graph $G''$ is strictly-convex, since $G'$ is convex and is a subdivision of $G''$, and since all the internal vertices of $G''$ have degree greater than two. Moreover, since  $(G',\vec{d_s},L'_s,\gamma'_s)$ and  $(G',\vec{d_t},L'_t,\gamma'_t)$ are hierarchical-st convex graphs, it follows that $(G'',\vec{d_s},L''_s,\gamma''_s)$ and $(G'',\vec{d_t},L''_t,\gamma''_t)$ are hierarchical-st strictly-convex graphs, where $L''_s=L'_s\setminus\{\gamma'_s(u)\}$, $L''_t=L'_t\setminus\{\gamma'_t(u)\}$, $\gamma''_s(z)=\gamma'_s(z)$ for each vertex $z$ in $G''$, and $\gamma''_t(z)=\gamma'_t(z)$ for each vertex $z$ in $G''$. By Theorem~\ref{th:hong-nagamochi-revised}, $(G'',\vec{d_s},L''_s,\gamma''_s)$ and $(G'',\vec{d_t},L''_t,\gamma''_t)$ admit strictly-convex level planar drawings $\Lambda''_s$ and $\Lambda''_t$, respectively. 

\begin{figure}[t]
	\begin{center}
		\begin{tabular}{c c c c}
			\mbox{\includegraphics[scale=0.6]{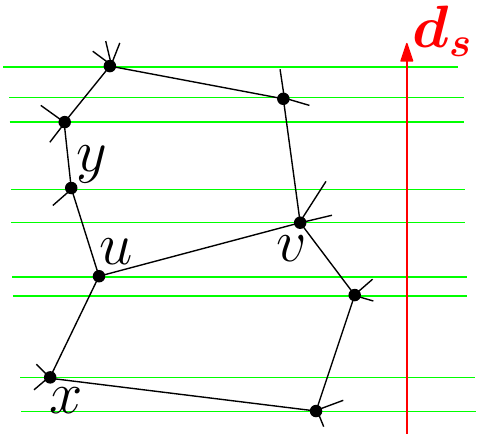}} \hspace{2mm} &
			\mbox{\includegraphics[scale=0.6]{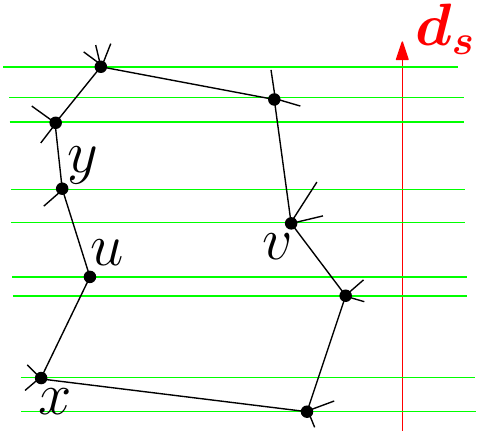}} \hspace{2mm} &
			\mbox{\includegraphics[scale=0.6]{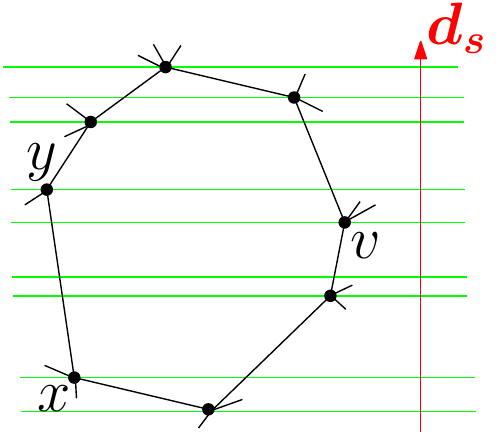}} \hspace{2mm} &
			\mbox{\includegraphics[scale=0.6]{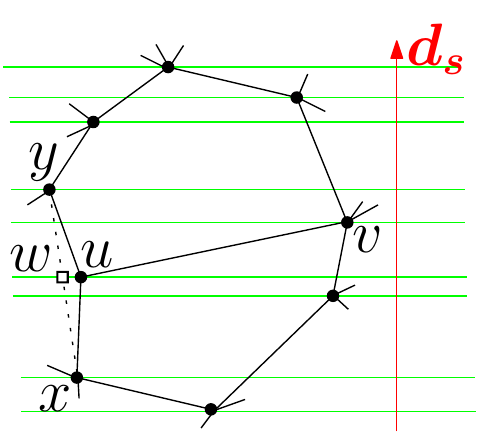}} \\
			(a) \hspace{2mm} & (b) \hspace{2mm} & (c) \hspace{2mm} & (d)
		\end{tabular}
		\caption{Drawings (a) $\Gamma_s$, (b) $\Lambda_s$, (c) $\Lambda''_s$, and (d) $\Gamma'_s$.}
		\label{fig:case2}
	\end{center}
\end{figure}

We modify $\Lambda''_s$ into a drawing $\Gamma'_s$ of $(G,\vec{d_s},L'_s,\gamma'_s)$, as in Fig.~\ref{fig:case2}. Assume w.l.o.g. that $\gamma'_s(x)<\gamma'_s(u)<\gamma'_s(y)$. Let $w$ be the intersection point of $\gamma'_s(u)$ and $\overline{xy}$ in $\Lambda''_s$ (where line $\gamma'_s(u)$ is the same as in $\Lambda_s$). Let $C''_{uv}$ be the facial cycle of $G''$ such that the facial cycle $C_{uv}$ of $G'$ is a subdivision of $C''_{uv}$. Insert $u$ in the interior of $C''_{uv}$, on $\gamma'_s(u)$, at distance $\epsilon>0$ from $w$. Remove edge $(x,y)$ from $\Lambda''_s$ and insert edges $(u,v)$, $(u,x)$, and $(u,y)$ as straight-line segments. Denote by $\Gamma'_s$ the resulting drawing. We have the following.

\begin{claimx} \label{cl:case-2}
	$\Gamma'_s$ is a strictly-convex level planar drawing of $(G,\vec{d_s},L'_s,\gamma'_s)$, provided that $\epsilon>0$ is sufficiently small.
\end{claimx}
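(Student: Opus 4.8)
The plan is to compare $\Gamma'_s$ with the strictly-convex level planar drawing $\Lambda''_s$: since $\Gamma'_s$ is obtained from $\Lambda''_s$ only by deleting edge $(x,y)$ and adding the vertex $u$ together with its three incident edges, while moving no other vertex, everything ``far from $u$'' is a bounded perturbation of the corresponding feature of $\Lambda''_s$, vanishing as $\epsilon\to 0$, and the finitely many new configurations around $u$ can be controlled directly. One then chooses $\epsilon$ below the finitely many thresholds these arguments produce.

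First I would check that $\Gamma'_s$ is a level planar drawing of $(G,\vec{d_s},L'_s,\gamma'_s)$. By construction $u$ lies on its line $\gamma'_s(u)$, and the three new edges are straight segments; they are $\vec{d_s}$-monotone because $\gamma'_s(u)$ is distinct from $\gamma'_s(x)$ and $\gamma'_s(y)$ (as $x,u,y$ is a path of $G'$ and adjacent vertices of a hierarchical graph lie on different lines) and from $\gamma'_s(v)$ (because $u$ and $v$ are two vertices of $G'$ and $\vec{d_s}$ was perturbed so as not to be orthogonal to the line through any two vertices of $G'$). For planarity, every vertex other than $u$ keeps its position from $\Lambda''_s$; if $\epsilon$ is small enough the edges $(u,x)$ and $(u,y)$ lie in an arbitrarily small neighbourhood of the deleted segment $\overline{xy}$, so they introduce no crossing and stay inside the union of the two faces of $G''$ formerly incident to $(x,y)$; and $(u,v)$ is a segment from an interior point of the strictly-convex polygon delimiting $f''_{uv}$ to a vertex of that polygon, hence it too lies inside it. This also shows that the resulting embedding is the prescribed plane embedding of $G$.

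Next I would verify strict convexity face by face. Every face of $G$ not incident to $u$ coincides with a face of $G''$ that is not incident to the deleted edge $(x,y)$; since no vertex moved it is delimited by exactly the same polygon as in $\Lambda''_s$, hence strictly convex. There remain the three faces incident to $u$: the two faces $g_1,g_2$ into which $(u,v)$ splits the strictly-convex polygon of $f''_{uv}$, and the face $g_3$ absorbing the face of $G''$ lying on the opposite side of $(x,y)$ from $f''_{uv}$. At $x$, $y$ and $v$ the angles of $g_1$, $g_2$, $g_3$ are bounded perturbations of angles that in $\Lambda''_s$ are strictly convex (or, at a vertex incident to the outer face, strictly concave) --- in particular the strictly-convex angle of $v$ in $f''_{uv}$ is merely split by $(u,v)$ into two angles each still below $\pi$ --- so by strictness a small enough $\epsilon$ keeps all of them of the required type. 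Finally one examines the three angles at $u$: as $\epsilon\to 0$ the directions of $(u,x)$ and $(u,y)$ tend to those of $\overline{xy}$ while $(u,v)$ points from the limiting position $w$ of $u$ into the interior of $C''_{uv}$ toward $v$, and a direct inspection of the resulting angular sectors shows that $u$ is strictly convex in $g_1$ and $g_2$ --- strictness here relying on $v$ lying strictly on one side of the line through $x$ and $y$ --- and that its angle in $g_3$ is on the side of $\pi$ prescribed by strict convexity for that face.

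I expect this last step to be the main obstacle: pinning down the angle at $u$ in $g_3$ and verifying that inserting $u$ on the interior side of $C''_{uv}$, off the segment $\overline{xy}$ and with $(x,y)$ deleted, produces there the correct angle relative to $\pi$. This is the only point where the precise placement recipe for $u$ is used, whereas all the remaining ingredients are routine ``$\epsilon$ small'' arguments resting on the strict convexity of $\Lambda''_s$.
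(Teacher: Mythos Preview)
Your approach is essentially the same as the paper's: verify the level and planarity conditions directly from the construction, then check strict convexity face by face, separating the angles that coincide with those of $\Lambda''_s$ from the new ones near $u$. The paper handles the angles at $v$ (split by $(u,v)$) and at $x$, $y$ (perturbed by at most some $\delta\to 0$ as $\epsilon\to 0$) by exactly the argument you sketch.

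Where you leave a gap is precisely where you flag it yourself: the angle at $u$ in $g_3$. Your limiting argument is inconclusive there, because as $\epsilon\to 0$ this angle tends to $\pi$, so the limit alone cannot decide which side of $\pi$ it falls on for $\epsilon>0$; you are reduced to promising a ``direct inspection'' that you do not carry out. The paper closes this with a one-line observation that dispenses with limits entirely: for small $\epsilon$ the point $u$ lies strictly inside the triangle $\Delta(x,y,v)$, since $w$ lies strictly between $x$ and $y$ on $\overline{xy}$ and $u$ is an $\epsilon$-perturbation of $w$ into the interior of $C''_{uv}$, hence towards the $v$-side of $\overline{xy}$. A point strictly inside a triangle sees each pair of vertices at an angle strictly less than $\pi$ on the side not containing the third vertex, so \emph{all three} angles at $u$ are strictly convex at once. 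This single geometric fact replaces your separate treatments of $g_1$, $g_2$, $g_3$ at $u$.

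Regarding your worry that $g_3$ might be the outer face: the paper's proof simply records that the polygon delimiting the outer face of $G$ in $\Gamma'_s$ is identical to that of $G''$ in $\Lambda''_s$; in particular $u$ is internal and $g_3$ is an internal face, so the required inequality is $<\pi$ and the triangle observation settles it. You need not hedge about ``the side of $\pi$ prescribed'' once this is in place.
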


\begin{proof}
	Drawing $\Gamma'_s$ is a level drawing of $(G,\vec{d_s},L'_s,\gamma'_s)$ since $\Lambda''_s$ is a level drawing of $(G'',\vec{d_s},L''_s,\gamma''_s)$ and since $u$ is on $\gamma'_s(u)$. Further, $\Gamma'_s$ is planar since $\Lambda''_s$ is planar and since edges $(u,v)$, $(u,x)$, and $(u,y)$ connect a point in the interior of the strictly-convex polygon delimiting a face in $\Lambda''_s$ with three points on the boundary of the same polygon. Finally, we prove that $\Gamma'_s$ is strictly-convex, provided that $\epsilon>0$ is sufficiently small. First, the outer face of $G$ in $\Gamma'_s$ is delimited by the same strictly-convex polygon as the outer face of $G''$ in $\Lambda''_s$. Moreover, consider any internal angle $\angle{azb}$ in $\Gamma'_s$ (in a straight-line planar drawing we call {\em internal angle} any angle internal to a polygon delimiting an internal face of the graph). If $\angle{azb}$ is also an internal angle in $\Lambda''_s$, then $\angle{azb}$ is convex in $f_z$, given that $\Lambda''_s$ is strictly-convex. Otherwise, $z$ is one of $u$, $v$, $x$, and $y$. If $z=v$, then $\angle{azb}$ is part of an internal angle in $\Lambda''_s$, hence it is convex in $f_z$. If $z=x$ (the case in which $z=y$ is analogous), then for any $\delta>0$, there exists an $\epsilon>0$ such that $\angle{azb}$ is at most $\delta$ radians plus an internal angle in $\Lambda''_s$, hence $\angle{azb}$ is convex in $f_z$, provided that $\epsilon$ is sufficiently small. Finally, all the angles incident to $u$ are convex, given that $u$ is internal to the triangle with vertices $x$, $y$, and $v$; hence $\angle{azb}$ is convex if $z=u$. 
\end{proof}

A strictly-convex level planar drawing $\Gamma'_t$ of $(G,\vec{d_t},L'_t,\gamma'_t)$ can be constructed analogously from $\Lambda''_t$. Drawings $\Gamma_s$ and $\Gamma'_s$ ($\Gamma_t$ and $\Gamma'_t$) are left-to-right equivalent, which can be proved as in Case~1. By Lemma~\ref{le:morphing-equivalent}, morphs \morph{\Gamma_s,\Gamma'_s} and \morph{\Gamma_t,\Gamma'_t} are strictly-convex and unidirectional. 

Inductively construct a unidirectional strictly-convex morph \morph{\Lambda''_s=\Lambda_0,\Lambda_1,\dots,\Lambda_\ell=\Lambda''_t} with $\ell\leq 2(n-1)+2(m-1)$ morphing steps  between $\Lambda''_s$ and $\Lambda''_t$. Let $0<\xi<1$ be sufficiently small so that the following holds true: For every $0\leq j\leq \ell$, insert $u$ in $\Lambda_j$ at a point which is a convex combination of the positions of $x$, $y$, and $v$ with coefficients $(\frac{1-\xi}{2},\frac{1-\xi}{2},\xi)$, remove edge $(x,y)$, and insert edges $(u,x)$, $(u,y)$, and $(u,v)$ as straight-line segments; then the resulting drawing $\Gamma_j$ of $G$ is strictly-convex. Such a $\xi>0$ exists. Namely, placing $v$ as a convex combination of the positions of $x$, $y$, and $v$ results in angles incident to $u$ and $v$ that are all convex. Moreover, as $\xi\to 0$, the point at which $u$ is placed approaches segment $\overline{xy}$, hence the size of any angle incident to $x$ or $y$ approaches the size of an angle incident to $x$ or $y$ in $\Lambda_j$, and the latter is strictly less than $\pi$ radians. 

With a single unidirectional strictly-convex linear morph, move $u$ in $\Gamma'_s$ to the point that is a convex combination of the positions of $x$, $y$, and $v$ with coefficients $(\frac{1-\xi}{2},\frac{1-\xi}{2},\xi)$; denote by $\Gamma''_s$ the drawing of $G$ obtained from this morph. Analogously, let \morph{\Gamma'_t,\Gamma''_t} be a  unidirectional strictly-convex linear morph, where the point at which $u$ is placed in $\Gamma''_t$ is a convex combination of the positions of $x$, $y$, and $v$ with coefficients $(\frac{1-\xi}{2},\frac{1-\xi}{2},\xi)$. 

For each $0\leq j\leq \ell -1$, $\Gamma_j$ and $\Gamma_{j+1}$ are left-to-right equivalent strictly-convex level planar drawings of the hierarchical-st strictly-convex graph $(G,\vec{d_j},L_j,\gamma_j)$, where $\vec{d_j}$ is an oriented straight line orthogonal to the direction of morph \morph{\Lambda_j,\Lambda_{j+1}}, $L_j$ is the set of lines through vertices of $G$ orthogonal to $\vec{d_j}$, and $\gamma_j$ maps each vertex of $G$ to the line in $L_j$ through it. In particular, $\Gamma_j$ and $\Gamma_{j+1}$ are strictly-convex drawings of $G$ since $\Lambda_j$ and $\Lambda_{j+1}$ are strictly-convex drawings of $G''$ and by the choice of $\xi$; further, every face of $G$ is an st-face in $\Gamma_j$ and $\Gamma_{j+1}$ by Lemmata~\ref{le:convex-is-monotone} and~\ref{le:two-polygons-monotone}; moreover, $u$ moves parallel to the other vertices since \morph{\Lambda_j,\Lambda_{j+1}} is unidirectional and since the points at which $u$ is placed in $\Gamma_j$ and $\Gamma_{j+1}$ are convex combinations of the positions of $x$, $y$, and $v$ with the same coefficients. By Lemma~\ref{le:morphing-equivalent}, \morph{\Gamma_j,\Gamma_{j+1}} is strictly-convex and unidirectional. Hence, \morph{\Gamma_s,\Gamma'_s,\Gamma''_s,=\Gamma_0,\Gamma_1,\dots,\Gamma_\ell=\Gamma''_t,\Gamma'_t,\Gamma_t} is a unidirectional strictly-convex morph between $\Gamma_s$ and $\Gamma_t$ with $\ell+4\leq 2n+2m$ morphing steps. 


{\bf Case~3} is very similar to Case~2, hence we only sketch the algorithm here. Let $G''$ be the graph obtained from $G'$ by replacing paths $(x_u,u,y_u)$ and $(x_v,v,y_v)$ with edges $(x_u,y_u)$ and $(x_v,y_v)$, respectively, where $x_u$ and $y_u$ ($x_v$ and $y_v$) are the only neighbors of $u$ (resp.\ $v$) in $G'$; $(G'',\vec{d_s},L''_s,\gamma''_s)$ and $(G'',\vec{d_t},L''_t,\gamma''_t)$ are hierarchical-st strictly-convex graphs, where $L''_s=L'_s\setminus\{\gamma'_s(u),\gamma'_s(v)\}$, $L''_t=L'_t\setminus\{\gamma'_t(u),\gamma'_t(v)\}$, $\gamma''_s(z)=\gamma'_s(z)$ for each vertex $z$ in $G''$, and $\gamma''_t(z)=\gamma'_t(z)$ for each vertex $z$ in $G''$. By Theorem~\ref{th:hong-nagamochi-revised}, $(G'',\vec{d_s},L''_s,\gamma''_s)$ and $(G'',\vec{d_t},L''_t,\gamma''_t)$ admit strictly-convex level planar drawings $\Lambda''_s$ and $\Lambda''_t$, respectively. 
We modify $\Lambda''_s$ into a strictly-convex level planar drawing $\Gamma'_s$ of $(G,\vec{d_s},L'_s,\gamma'_s)$ by inserting $u$ ($v$) on $\gamma'_s(u)$ (resp.\ $\gamma'_s(v)$) at distance $\epsilon>0$ from the intersection point of $\gamma'_s(u)$ with segment $\overline{x_uy_u}$ (of $\gamma'_s(v)$ with segment $\overline{x_vy_v}$) in the interior of the facial cycle of $G''$ such that the facial cycle $C_{uv}$ of $G'$ is a subdivision of $C''_{uv}$. Analogously, we modify $\Lambda''_t$ into a strictly-convex level planar drawing $\Gamma'_t$ of $(G,\vec{d_t},L'_t,\gamma'_t)$. Drawings $\Gamma_s$ and $\Gamma'_s$ ($\Gamma_t$ and $\Gamma'_t$) are left-to-right equivalent. 

Inductively construct a unidirectional strictly-convex morph \morph{\Lambda''_s=\Lambda_0,\dots,\Lambda_\ell=\Lambda''_t} with $\ell\leq 2(n-2)+2(m-1)$ morphing steps. Let $\xi>0$ be sufficiently small so that for every $0\leq j\leq \ell$, inserting $u$ ($v$) in $\Lambda_j$ at a convex combination of the positions of $x_u$, $y_u$, $x_v$, and $y_v$ with coefficients $(\frac{1-\xi}{2},\frac{1-\xi}{2},\frac{\xi}{2},\frac{\xi}{2})$ (resp.\ $(\frac{\xi}{2},\frac{\xi}{2},\frac{1-\xi}{2},\frac{1-\xi}{2})$), removing edges $(x_u,y_u)$ and $(x_v,y_v)$, and inserting edges $(x_u,u)$, $(y_u,u)$, $(x_v,v)$, $(y_v,v)$, and $(u,v)$ results in a strictly-convex drawing $\Gamma_j$ of $G$. With a unidirectional strictly-convex linear morph \morph{\Gamma'_s,\Gamma''_s}, move $u$ in $\Gamma'_s$ to the point that is a convex combination of the positions of $x_u$, $y_u$, $x_v$, and $y_v$ with coefficients $(\frac{1-\xi}{2},\frac{1-\xi}{2},\frac{\xi}{2},\frac{\xi}{2})$. With a unidirectional strictly-convex linear morph \morph{\Gamma''_s,\Gamma'''_s}, move $v$ in $\Gamma''_s$ to the point that is a convex combination of the positions of $x_u$, $y_u$, $x_v$, and $y_v$ with coefficients $(\frac{\xi}{2},\frac{\xi}{2},\frac{1-\xi}{2},\frac{1-\xi}{2})$. Define morph \morph{\Gamma'_t,\Gamma''_t,\Gamma'''_t} analogously. For each $0\leq j\leq \ell -1$, $\Gamma_j$ and $\Gamma_{j+1}$ are left-to-right equivalent strictly-convex level planar drawings of the hierarchical-st strictly-convex graph $(G,\vec{d_j},L_j,\gamma_j)$, where $\vec{d_j}$ is an oriented line orthogonal to the direction of morph \morph{\Lambda_j,\Lambda_{j+1}}, $L_j$ is the set of lines through vertices of $G$ and orthogonal to $\vec{d_j}$, and $\gamma_j$ maps each vertex of $G$ to the line in $L_j$ through it. By Lemma~\ref{le:morphing-equivalent}, \morph{\Gamma_s,\Gamma'_s,\Gamma''_s,\Gamma'''_s=\Gamma_0,\dots,\Gamma_\ell=\Gamma'''_t,\Gamma''_t,\Gamma'_t,\Gamma_t} is a unidirectional strictly-convex morph between $\Gamma_s$ and $\Gamma_t$ with $\ell +6\leq 2n+2m$ morphing steps. 
We get the following.


\begin{theorem}\label{th:strictly-main}
There exists an algorithm to construct a strictly-convex unidirectional morph  with $O(n)$ morphing steps between any two strictly-convex drawings of the same $n$-vertex plane graph.
\end{theorem}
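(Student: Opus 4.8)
The plan is to read the base case (Claim~\ref{cl:base-case-cycle}) and the three inductive cases developed above as a single proof by induction on the number $m$ of internal faces of $G$, and then to observe that $m$ is linear in $n$. If $m=1$, then $G$ is a cycle and Claim~\ref{cl:base-case-cycle} already supplies a strictly-convex unidirectional morph with $2n+2 = 2n+2m$ steps. If $m>1$, then $G$, being strictly-convex, is in particular convex, so Lemma~\ref{le:bg-construction-internally} applied to $G=G_1$ yields $G'=G_2\subseteq G$, a subdivision of a simple internally triconnected plane graph, obtained from $G$ either by deleting an internal path --- necessarily a single edge $(u,v)$, since a strictly-convex graph has no degree-$2$ internal vertex (Theorem~\ref{th:strictly-convex-characterization}) --- or by deleting a degree-$3$ internal vertex $u$ with its three incident edges. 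These are precisely the configurations handled above: in each, the morph of $G$ is built from an inductively computed morph of a smaller strictly-convex graph by drawing the removed part (the edge $(u,v)$, or the vertex $u$ placed at a fixed convex combination of three reference vertices) in every intermediate drawing, together with a bounded number of extra unidirectional strictly-convex steps at the beginning and/or the end that move only the removed vertices.

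Next I would check that the inductive hypothesis genuinely applies, i.e.\ that each graph we recurse on is strictly-convex with strictly fewer internal faces. By Lemma~\ref{le:bg-construction-internally} the reduced graph is a subdivision of a simple internally triconnected plane graph, hence convex; moreover the degree drops only at vertices that either remain external (the neighbours on the outer cycle $C$, in the degree-$3$ removal) or are afterwards suppressed (the at most two new degree-$2$ internal vertices in Cases~2 and~3, which after suppression give the strictly-convex graph $G''$), so by Theorem~\ref{th:strictly-convex-characterization} the graph on which we recurse is strictly-convex, and the number of internal faces decreases by $1$ (by $2$ in the degree-$3$ case). That every individual morphing step is planar, keeps all faces strictly convex, and is unidirectional has already been established through Lemma~\ref{le:two-polygons-monotone}, Lemma~\ref{le:orientation}, Lemma~\ref{le:morphing-equivalent}, Theorem~\ref{th:hong-nagamochi-revised}, and Claims~\ref{cl:strictly-claim} and~\ref{cl:case-2}, so the composed morph is strictly-convex and unidirectional.

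Finally I would carry out the step count and conclude. Writing $f(n,m)$ for the number of morphing steps produced on an $n$-vertex graph with $m$ internal faces, the base value is $f(n,1)=2n+2$, and the four reductions above give $f(n,m)\le f(n-1,m-2)+1$ (degree-$3$ removal), $f(n,m)\le f(n,m-1)+2$ (Case~1), $f(n,m)\le f(n-1,m-1)+4$ (Case~2), and $f(n,m)\le f(n-2,m-1)+6$ (Case~3); a one-line induction shows each of these preserves $f(n,m)\le 2n+2m$. Since $G$ is a subdivision of a simple internally triconnected (hence simple planar) graph, Euler's formula gives $m\le 2n-5$, so the resulting morph has $2n+2m=O(n)$ steps, all unidirectional and strictly-convex. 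I expect the real obstacle here to be organizational rather than conceptual --- the hard work lies in Lemma~\ref{le:bg-construction-internally}, Lemma~\ref{le:morphing-equivalent}, and Theorem~\ref{th:hong-nagamochi-revised} --- namely keeping the parameters of the four recursive calls consistent, confirming that the additive constants $1,2,4,6$ never push the running total past $2n+2m$, and certifying that every recursive call really does land on a strictly-convex graph so that the induction closes.
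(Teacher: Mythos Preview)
Your proposal is correct and follows essentially the same approach as the paper: the theorem is stated as the conclusion of the entire algorithm developed in Section~\ref{se:algorithm}, and your write-up is an accurate summary of that inductive structure, with the same base case, the same four reduction cases (degree-$3$ removal and Cases~1--3), the same additive constants $1,2,4,6$, and the same invariant $f(n,m)\le 2n+2m$. The only addition you make explicit that the paper leaves implicit is the final application of Euler's formula to bound $m$ by $O(n)$.
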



A simple enhancement of the above described algorithm allows us to extend our results to (non-strictly) convex drawings of convex graphs. We have the following.

\begin{theorem}\label{th:main}
There exists an algorithm to construct a convex unidirectional morph with $O(n)$ morphing steps between any two convex drawings of the same $n$-vertex plane graph.
\end{theorem}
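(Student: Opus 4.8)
The plan is to reuse the algorithm behind Theorem~\ref{th:strictly-main} almost verbatim, replacing ``strictly-convex'' by ``convex'' everywhere and supplying non-strict analogues of the ingredients it rests on. The first and easiest point is that internal degree-$2$ vertices create no real difficulty: in any convex drawing a degree-$2$ internal vertex is flat in each of its two (internal) incident faces, since the two angles at it sum to $2\pi$ and neither exceeds $\pi$. Hence a whole maximal path $(u_1,\dots,u_k)$ of internal degree-$2$ vertices is drawn as a sequence of collinear segments, i.e.\ as the single straight segment $\overline{u_1u_k}$ carrying the points $u_2,\dots,u_{k-1}$ in its interior. I would therefore suppress, both in $G$ and in the drawings $\Gamma_s,\Gamma_t$, every such maximal path, obtaining a strictly-convex graph $\hat G$ — a subdivision of the same internally triconnected plane graph as $G$ in which every degree-$2$ vertex is external, hence strictly-convex by Theorems~\ref{th:convex-characterization} and~\ref{th:strictly-convex-characterization} — together with convex drawings $\hat\Gamma_s,\hat\Gamma_t$ of $\hat G$. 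The same observation is applied inside the recursion each time Lemma~\ref{le:bg-construction-internally} removes a path, or three paths incident to a degree-$3$ vertex, whose interiors are degree-$2$ internal vertices: those vertices are flat, hence collinear with their neighbours, so the path behaves exactly like the single edge into which it is suppressed. At every time instant of the final morph I keep the suppressed vertices at \emph{fixed} prescribed ratios along the corresponding (moving) segments — two extra morphing steps per input drawing suffice to normalise these ratios — which keeps every vertex trajectory linear, every face convex, and the overall number of steps linear.

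It then remains to morph between the two convex drawings $\hat\Gamma_s,\hat\Gamma_t$ of the strictly-convex graph $\hat G$. I would run the case analysis of Theorem~\ref{th:strictly-main} line by line, with three substitutions. In place of Theorem~\ref{th:hong-nagamochi-revised} I would use directly the original result of Hong and Nagamochi~\cite{hn-cdhpgcpg-10}, which already yields a convex level planar drawing of a hierarchical-st convex graph with a prescribed convex outer polygon, so no strengthening is needed here. In place of Lemma~\ref{le:morphing-equivalent} and Claim~\ref{cl:base-case-cycle} I would use their weak-inequality versions: the linear morph between two left-to-right equivalent \emph{convex} level planar drawings of a hierarchical-st convex graph is convex and unidirectional, and the cycle base case still yields a convex unidirectional morph with at most $2n+2$ steps. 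Their proofs are the proofs in the paper with every occurrence of ``$<\pi$'' relaxed to ``$\le\pi$'' (and, in Lemma~\ref{le:morphing-equivalent}, ``precedes'' relaxed to ``precedes or coincides with''). The places that need real attention are the monotonicity lemmas (Lemmas~\ref{le:convex-is-monotone} and~\ref{le:two-polygons-monotone}), stated for strictly-convex polygons: I would re-prove them for convex polygons that may have flat vertices. A convex polygon with flat vertices is still $\vec d$-monotone for any $\vec d$ not orthogonal to a line through two of its vertices (a flat vertex merely contributes three collinear but distinct projections), and the direction constructed in Lemma~\ref{le:two-polygons-monotone} can still be taken to be a sufficiently small generic rotation of the perpendicular to the shared edge; the only change to that proof is to replace the requirement ``every internal angle is smaller than $\pi-\epsilon$'' by ``the rotated direction is not orthogonal to any line through two vertices of $Q_1\cup Q_2$''. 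With these ingredients each step — the degree-$3$-vertex case and Cases~1, 2, 3 — produces convex, not merely planar, intermediate drawings, and the step count is unchanged, so the morph of $\hat G$ has $O(n)$ steps.

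The step I expect to be the main obstacle is precisely the presence of flat angles in the \emph{given} drawings $\hat\Gamma_s,\hat\Gamma_t$. Because they need not be strictly-convex, in Case~1 the drawing $\Gamma_s'$ supplied by Hong--Nagamochi is only convex, and one must still check that the linear morph $\langle\hat\Gamma_s,\Gamma_s'\rangle$ is convex; this rests on the left-to-right-equivalence argument of Case~1 surviving with flat angles, which in turn uses the weakened monotonicity lemmas above together with the fact that in a hierarchical-st convex graph every vertex or edge still lies on a $\vec d$-monotone path joining two vertices of the outer cycle even when some faces are only weakly convex — a fact I would derive directly from the hierarchical-st property. A cleaner alternative, which I would pursue if this bookkeeping became heavy, is to first establish a standalone lemma — ``any convex drawing of a strictly-convex graph can be morphed, preserving convexity, into a strictly-convex drawing with $O(n)$ steps'' — whose proof mimics, in the non-hierarchical setting, the left-flat/right-flat-path argument of Theorem~\ref{th:hong-nagamochi-revised}; one would then eliminate all flat angles of $\hat\Gamma_s$ and $\hat\Gamma_t$ at the outset and invoke Theorem~\ref{th:strictly-main} as a black box. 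In either case, prepending and appending these normalising morphs to the strictly-convex morph of $\hat G$ and re-inserting the suppressed flat vertices at every instant yields a convex unidirectional morph of $G$ with $O(n)$ steps, as claimed.
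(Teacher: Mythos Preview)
Your high-level plan---suppress every maximal internal degree-$2$ path, morph the resulting strictly-convex graph, and carry the suppressed vertices along at fixed convex-combination ratios on the moving segments---is precisely the paper's. Two points, however, separate your write-up from the paper's argument.

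First, you omit a step that the paper performs before anything else: it makes the \emph{outer} polygon strictly-convex. For each maximal run $(u_1,\dots,u_k)$ of degree-$2$ external vertices that are flat in the outer face, it pushes $u_2,\dots,u_{k-1}$ onto a shallow circular arc with a single unidirectional step orthogonal to $\overline{u_1u_k}$; this costs $O(n)$ steps overall. Without this step your option~B stalls, because the left-flat/right-flat argument of Theorem~\ref{th:hong-nagamochi-revised} you want to mimic already presupposes a strictly-convex outer boundary, while your option~A must drag flat external angles through every weakened lemma. Separately, ``two extra morphing steps per input drawing'' is not enough to normalise the internal ratios: distinct maximal internal degree-$2$ paths lie on distinct lines, and since each morphing step must be unidirectional you need one step per path; the paper explicitly spends $O(n)$ steps here as well. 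Neither point affects the final $O(n)$ bound, but the first is a genuine missing ingredient.

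Second, the paper carries out neither of your options~A or~B. After the two preprocessing phases it simply invokes Theorem~\ref{th:strictly-main} as a black box on the drawings $\Lambda''_s,\Lambda'_t$ of the suppressed graph $G'$. Your worry---that a convex drawing of a strictly-convex graph with strictly-convex outer polygon may still have a flat angle at an internal vertex of degree $\ge 3$, so that $\Lambda''_s$ need not be strictly-convex---is legitimate, and the paper's proof does not explicitly address it. So if you want an airtight argument, your option~B (prefaced by the outer-polygon step above) is in fact the cleaner route; it is just more work than what the paper actually writes down.
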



\begin{proof}
Consider drawing $\Gamma_s$ and let $P=(u_1,\dots,u_k)$ be any maximal path in the cycle $C$ delimiting the outer face $f$ of $G$ such that $u_2,\dots,u_{k-1}$ are degree-$2$ vertices of $G$ that are flat in $f$. Let $a$ be a circular arc between $u_1$ and $u_k$ that is monotone with respect to the direction of $\overline{u_1u_k}$ and that forms a convex curve with $C-\{u_2,\dots,u_{k-1}\}$. Move $u_2,\dots,u_{k-1}$ on $a$ with one morphing step in the direction orthogonal to $\overline{u_1u_k}$. Repeating this operation for every path $P$ satisfying the above properties results in a convex unidirectional morph with $O(n)$ morphing steps between $\Gamma_s$ and a convex drawing $\Gamma'_s$ of $G$ such that the polygon delimiting the outer face of $G$ is strictly-convex. Apply the same algorithm to construct a convex unidirectional morph with $O(n)$ morphing steps between $\Gamma_t$ and a convex drawing $\Gamma'_t$ of $G$ such that the polygon delimiting the outer face of $G$ is strictly-convex.

Consider any maximal path $P=(u_1,\dots,u_k)$ such that $u_2,\dots,u_{k-1}$ are degree-$2$ internal vertices of $G$. For $2\leq i\leq k-1$, the position of $u_i$ is a convex combination of the positions of $u_1$ and $u_k$ both in $\Gamma'_s$ and in $\Gamma'_t$ (the coefficients of such convex combinations are, in general, different in $\Gamma'_s$ and in $\Gamma'_t$). With a single linear morph in the direction of $\overline{u_1u_k}$, move each of $u_2,\dots,u_{k-1}$ in $\Gamma'_s$ to the point which is a convex combination of the positions of $u_1$ and $u_k$ with the same coefficients as in $\Gamma'_t$. Repeating this operation for every path $P$ satisfying the above properties results in a convex unidirectional morph with $O(n)$ morphing steps between $\Gamma'_s$ and a convex drawing $\Gamma''_s$ of $G$ such that the polygon delimiting the outer face of $G$ is strictly-convex and such that, for each maximal path $(u_1,\dots,u_k)$ where $u_2,\dots,u_{k-1}$ are degree-$2$ internal vertices of $G$, the coefficients of $u_i$ as a convex combination of $u_1$ and $u_k$ are the same in $\Gamma''_s$ and in $\Gamma'_t$.

Replace each maximal path $(u_1,\dots,u_k)$ in $G$ such that $u_2,\dots,u_{k-1}$ are degree-$2$ internal vertices of $G$ with an edge $(u_1,u_k)$. Denote by $G'$ the resulting graph; by Theorems~\ref{th:convex-characterization} and~\ref{th:strictly-convex-characterization}, $G'$ is strictly-convex. Denote by $\Lambda''_s$ and $\Lambda'_t$ the drawings of $G'$ obtained respectively from $\Gamma''_s$ and $\Gamma'_t$ by replacing each path $(u_1,\dots,u_k)$ as above with an edge $(u_1,u_k)$. Compute a strictly-convex unidirectional morph \morph{\Lambda''_s=\Lambda_0,\Lambda_1,\dots,\Lambda_\ell=\Lambda'_t} with $\ell\in O(n)$ morphing steps as in Theorem~\ref{th:strictly-main}. For each path $(u_1,\dots,u_k)$ satisfying the above properties, for each $2\leq i\leq k-1$ and $1\leq j\leq \ell-1$, draw $u_i$ in $\Lambda_j$ at a point that is the convex combination of the positions of $u_1$ and $u_k$ in $\Lambda_j$ with the same coefficients as in $\Gamma''_s$ and in $\Gamma'_t$. This results in a morph \morph{\Gamma''_s=\Gamma_0,\Gamma_1,\dots,\Gamma_\ell=\Gamma'_t}, which is convex and unidirectional. Namely, in every morphing step \morph{\Gamma_j,\Gamma_{j+1}}, vertex $u_i$ moves between two points that are the convex combinations of the positions of $u_1$ and $u_k$ with the same coefficients, hence it moves parallel to each of $u_1$ and $u_k$ (from which \morph{\Gamma''_s=\Gamma_0,\Gamma_1,\dots,\Gamma_\ell=\Gamma'_t} is unidirectional) and it stays on $\overline{u_1u_k}$ at any time instant of \morph{\Gamma_j,\Gamma_{j+1}} (from which \morph{\Gamma''_s=\Gamma_0,\Gamma_1,\dots,\Gamma_\ell=\Gamma'_t} is convex). Hence, \morph{\Gamma_s,\dots,\Gamma'_s,\dots,\Gamma''_s=\Gamma_0,\Gamma_1,\dots,\Gamma_\ell=\Gamma'_t,\dots,\Gamma_t} is a unidirectional convex morph between $\Gamma_s$ and $\Gamma_t$ with $O(n)$ morphing steps. 
\end{proof}






\bibliography{bibliography}
\bibliographystyle{plain}

\end{document}